\documentclass[11pt, a4paper]{article}
\pdfoutput=1 %

\usepackage{fullpage}

\usepackage{mdframed}

\usepackage[utf8]{inputenc}

\usepackage{libertine}

\usepackage{amsfonts}
\usepackage{amsmath}
\usepackage{amssymb}
\usepackage{amsthm}
\usepackage{float}
\usepackage{stmaryrd}

\usepackage{hyperref}
\usepackage[svgnames]{xcolor}
\hypersetup{colorlinks={true},urlcolor={blue},linkcolor={DarkBlue},citecolor=[named]{DarkGreen}}
\usepackage[nocompress]{cite}

\newif\iflong
\newif\ifshort

\longtrue

\iflong
\else
\shorttrue
\fi

\usepackage{xspace}
\usepackage{fancyhdr}
\usepackage{tcolorbox}

\usepackage{microtype}
\usepackage[capitalise,nameinlink]{cleveref}
\usepackage{enumitem}

\usepackage{cleveref}

\usepackage{doi}

\usepackage{authblk}

\usepackage{booktabs} %
\usepackage[ruled,linesnumbered]{algorithm2e} %
\usepackage{setspace}

\SetAlFnt{\small}
\SetAlCapFnt{\small}
\SetAlCapNameFnt{\small}
\SetAlCapHSkip{0pt}
\IncMargin{-\parindent}

\usepackage{tikz}  %
\usetikzlibrary{arrows}
\usetikzlibrary{patterns}
\usetikzlibrary{decorations.shapes}
\tikzstyle{overbrace text style}=[font=\tiny, above, pos=.5, yshift=5pt]
\tikzstyle{overbrace style}=[decorate,decoration={brace,raise=5pt,amplitude=3pt}]
\usetikzlibrary{shapes.geometric}

\usepackage{authblk}
\usepackage{xspace}

\theoremstyle{definition}
\newtheorem{definition}{Definition}

\theoremstyle{plain}
\newtheorem{theorem}{Theorem}[section]
\newtheorem{lemma}[theorem]{Lemma}
\newtheorem{corollary}[theorem]{Corollary}

\newtheorem{observation}{Observation}
\newtheorem{claim}{Claim}
\newenvironment{claimproof}[1]{\par\noindent\emph{Proof.}\hspace{0.15cm}#1}{\hfill $\blacktriangleleft$\smallskip}

\theoremstyle{definition}

\usepackage{amsfonts}
\usepackage{amssymb}
\usepackage{mathtools}
\usepackage{xspace}
\usepackage{xcolor}
\usepackage{cleveref}
\usepackage{todonotes}
\usepackage{comment}

\usepackage{boxedminipage}
\newcommand{\prob}[3]{
	\begin{center}
		\begin{boxedminipage}{\columnwidth}
			#1\\[5pt]
			\begin{tabular}{l p{0.77 \columnwidth}}
				{\bf\hspace{-0.2cm}Given:} & #2\\[5pt]
				{\bf\hspace{-0.2cm}Question:} \hspace{-0.4cm} & #3
			\end{tabular}
		\end{boxedminipage}
	\end{center}
}

\newcommand{\TDG}{\textsc{Topological Distance Game}\xspace}
\newcommand{\TDGshort}{\textsc{TDG}\xspace}
\newcommand{\IRTDG}{\textsc{IR}-\TDG}
\newcommand{\IRTDGshort}{\textsc{IR-}\TDGshort}

\newcommand{\agents}{\ensuremath{N}} %
\newcommand{\numagents}{\ensuremath{|\agents|}} %
\newcommand{\util}{\ensuremath{\operatorname{u}}} %
\newcommand{\assgn}{\ensuremath{\lambda}} %
\newcommand{\dffn}{\ensuremath{f}} %
\newcommand{\dist}{\ensuremath{\operatorname{dist}}} %

\newcommand{\sd}{\ensuremath{\operatorname{sd}}} %
\newcommand{\tww}{\ensuremath{\operatorname{tww}}} %

\newcommand{\NP}{\textsf{NP}\xspace}
\newcommand{\NPh}{\NP-hard\xspace}
\newcommand{\NPhness}{\NP-hardness\xspace}
\newcommand{\NPc}{\NP-complete\xspace}
\newcommand{\NPcness}{\NP-completeness\xspace}

\newcommand{\FPT}{\textsf{FPT}\xspace}
\newcommand{\XP}{\textsf{XP}\xspace}
\newcommand{\W}[1][1]{\textsf{W[#1]}\xspace}
\newcommand{\Wh}[1][1]{\W[#1]-hard\xspace}
\newcommand{\Whness}[1][1]{\W[#1]-hardness\xspace}
\newcommand{\Wc}[1][1]{\W[#1]-complete\xspace}
\newcommand{\pNPh}{\textsf{para-NP}-hard\xspace}

\newcommand{\Oh}[1]{\ensuremath{{\mathcal{O}\left(#1\right)}}}

\allowdisplaybreaks

\title{\bf Individual Rationality in Topological Distance Games is Surprisingly Hard}

\author[1]{Argyrios Deligkas}
\author[1]{Eduard Eiben}
\author[2]{Dušan Knop}
\author[2]{Šimon Schierreich}

\affil[1]{Royal Holloway, University of London, United Kingdom}
\affil[2]{Czech Technical University in Prague, Czech Republic}

\date{\today}

\begin{document}
	
\maketitle
\thispagestyle{empty}

\begin{abstract}
	In the recently introduced \emph{topological distance games}, strategic agents need to be assigned to a subset of vertices of a topology. In the assignment, the utility of an agent depends on both the agent's inherent utilities for other agents and its distance from them on the topology.
	We study the computational complexity of finding \emph{individually rational} outcomes; this notion is widely assumed to be the very minimal stability requirement and requires that the utility of every agent in a solution is non-negative.
	We perform a comprehensive study of the problem's complexity, and we prove that even in very basic cases, deciding whether an individually rational solution exists is intractable. To reach at least some tractability, one needs to combine multiple restrictions of the input instance, including the number of agents \emph{and} the topology \emph{and} the influence of distant agents on the utility.
\end{abstract}

\section{Introduction}
You are the coordinator of the annual banquet of your organization and your task is to convince all employees to attend the event. 
Clearly, a person agrees to show up at such an event, only if they get {\em at least} some positive experience from their participation.
However, the enmity between grumpy-John, prickly-Jack, and grouchy-Joe is known to everyone. 
It should be fairly easy to convince all three of them to attend the banquet if their seats are far away from each other and some friendly people sit between them. Right? As we will see, it is not easy at all.

Situations like the above occur in several other scenarios; think of assigning desks to students, offices to academics, or seats on an assembly line. In such cases, the happiness of a participant depends not only on who are their immediate neighbors and how close their friends are, but also on how far their ``enemies'' are located.
Recently, Bullinger and Suksompong \cite{BullingerS2023} proposed the elegant framework of {\em topological distance games} in order to model such preferences for the agents. 
In such a game, there is an underlying \emph{topology}, represented by an undirected graph, where a set of agents needs to be assigned on (a subset of) its vertices. 
Crucially, though, the {\em utility} of an agent depends not only on its inherent utility for other agents but also on the distance from them on the topology.

In this model, Bullinger and Suksompong \cite{BullingerS2023} studied the existence and the complexity of {\em stable} outcomes. More specifically, they have focused on {\em jump stability}, i.e., an assignment where no agent has incentives to ``jump'' to an {\em empty} vertex in order to increase their utility. 
However, they have assumed that the agents actually {\em want} to participate in the game, even if there is no way to receive positive utility. For example, imagine a topology with the same number of vertices as the number of agents and two agents that hate each other. Then, although every assignment is jump-stable, arguably these agents would not participate if they had to sit next to each other; it is simply not {\em individually rational}.

\subsection{Our Contribution}
We perform a comprehensive and in-depth study of the complexity of individual rationality (IR) in topological distance games with the aim of identifying the precise cut-off between tractable and intractable classes of instances.
IR is {\em ``a minimal requirement for a solution to be considered stable''}~\cite{AzizS16} and formally states that there exists an assignment that guarantees non-negative utility to every agent.

We identify several dimensions of the model -- the number of agents, the enmity graph, the distance factor function, and the topology structure -- and we sketch the complexity of the problem with respect to them. Here, the enmity graph is a directed graph that shows which agents are ``enemies'', i.e., get negative utility from their interaction, and the distance factor function is a monotonically decreasing function that weighs the utility of the agents depending on their distance.

We begin in Section~\ref{sec:agents_input} by considering the number of agents as part of the input and we show that ensuring IR in this case is extremely hard, even for very restricted cases; see \Cref{fig:results:agentsInput} for a simplified overview of the results of this section. 
We start our investigation by not imposing any restrictions on the enmity graph and we show that the problem is \NPc for {\em every} distance factor function, even when the utilities of the agents are symmetric and they have at most 2 different values per agent (Thm.~\ref{thm:unrestricted_ennmity}).
Hence, in order to hope for tractability, we need to restrict the enmity graph. 
We then show that, if at most one agent has enemies, the problem can be solved in polynomial time (Thm.~\ref{thm:IRTDG:P:singleArcEnmityGraph}).
Unfortunately, this is the best possible, since the problem is \NPc when there are two arcs in the enmity graph, i.e., there are two agents with enemies (this establishes a dichotomy with respect to the number of arcs of the enmity graph).
More specifically, we provide two different reasons for hardness when there are two arcs in the enmity graph: Theorem~\ref{thm:IRTDG:NPc:enmityGraphWithTwoArcs} shows that the problem is hard even when the utilities are symmetric, while Theorem~\ref{thm:two_arcs_same} shows hardness for {\em any} distance factor function when the two arcs are towards the same vertex and there are two types of utilities. 
Finally, we show that restricting only the topology does not help us either, as the problem remains intractable even when the topology is a path and there are four types of utilities (Thm.~\ref{thm:NPc_path}).

With the above considerations in mind, in Section~\ref{sec:agents_parameter} we take the number of agents as a parameter; see \Cref{fig:results:agentsParam} for a summary of results of this section. 
Our first result shows that there is an easy \XP algorithm for the problem (Thm.~\ref{thm:XP_agents}). 
However, without any further restrictions, this result is ``tight'', as the problem is \Wc even with two types of symmetric utility functions (Thm.~\ref{thm:IRTDG:Wh:agents}) and, moreover, the running time of the algorithm cannot be substantially improved under the well-known Exponential-Time Hypothesis (ETH). 

Next, we show that restricting ``just'' the topology is not sufficient for tractability, as the problem remains \Wc even on path topology; this is our technically most involved result (Thm.~\ref{thm:Wh:path}). 
On the positive side though, the problem is in \FPT on path topology when the enmity graph has arcs only towards at most one agent (Thm.~\ref{thm:FPT_path}); this combination of structural restrictions seems necessary since the problem becomes \Wc under the enmity graph above for general topologies, even when there are only two types of utilities (Thm.~\ref{thm:Wc:two_types}).
Our last set of results contains two \FPT algorithms. 
Theorem~\ref{thm:FPT_cw} establishes fixed-parameter tractability for bounded distance factor functions parameterized by the number of agents plus the twin-width of the topology. 
Twin-width~\cite{BonnetKTW22} is a structural graph parameter which is more general than several well-studied parameters like tree-width and clique-width and is constant for several large families of graphs such as planar graphs~\cite{HlinenyJ2023}.
To the best of our knowledge, this is the first fixed-parameter algorithm for twin-width in the areas of algorithmic game theory and computational social choice.
Theorem~\ref{thm:FPT_cw} essentially yields fixed-parameter tractability for instances with an arbitrary distance factor function parameterized by the number of agents and the shrub-depth of the topology combined (Cor.~\ref{thm:FPT_sd}). Shrub-depth, informally speaking, corresponds to the smallest depth of a tree into which the topology can be embedded.

\subsection{Related Work}
Topological Distance Games are closely related to many well-known classes of coalition formation and network games.

The first important source of inspiration includes \emph{hedonic games}~\cite{DrezeG1980}, a prominent model in coalition formation. Here, we are given a set of agents together with their preferences, and our goal is to partition them into coalitions. The crucial property of hedonic games is that the agent's utility is based solely on other members of his or her coalition. In general hedonic games, every agent $a$ has preferences over possible coalitions (subsets of agents) containing $a$. It follows that such preferences cannot be represented succinctly, and therefore many variants with restricted preferences are studied, such as graphical~\cite{Peters2016a,HanakaL2022}, fractional~\cite{AzizBBHOP2019,FanelliMM2021}, anonymous~\cite{BogomolnaiaJ2002}, or diversity~\cite{BredereckEI2019,GanianHKSS2023,Darmann2023}. The variant that is closest to our setting is hedonic games with additively-separable preferences~\cite{BogomolnaiaJ2002,AzizBS2013} (ADHGs), where each agent $a$ assigns some value to each other agent $b$ and the utility for agent~$a$ is simply the sum of values agent $a$ has for all other agents in its coalition. The modeling of ADHGs in our model is very straightforward; the topology consists of $n$ cliques, each of size $n$ (or $k$ in the case of fixed-size coalitions~\cite{BiloMM2022,LiMNN2023}), where $n$ is the number of agents.
It should be noted that achieving individual rationality in ADHGs is trivial: we put each agent into its own coalition.

Closely related are also \emph{social distance games}~\cite{BranzeiL2011,KaklamanisKP2018,BalliuFMO2019,BalliuFMO2022}, where our goal is again to partition agents into coalitions. This time, the agents are given together with a topology representing relations between them. Agent's utility with respect to a coalition is then the average of the reciprocal distances to all other agents in this coalition; however, we assume the distances with respect to the subgraph induced by the members of this coalition. Consequently, the role of the topology in social distance games is very different compared to TDGs. Later, Flammini et al.~\cite{FlamminiKOV2020} generalized social distance games by adding a global scoring vector that allows us to extend the model beyond the reciprocal distance function. This direction was further developed in Ganian et al.~\cite{GanianHKRSO2023}, who studied the computational complexity of this generalization with respect to multiple stability notions, including individual rationality. 

None of the above-mentioned models included the assignment of agents to a topology. In this line of research, very prominent is Schelling's segregation model~\cite{Schelling1969,Schelling1971} and its game-theoretical refinement called \emph{Schelling games}~\cite{ChauhanLM2018,EchzellFLMPSSS2019,AgarwalEGISV2021,KreiselBFN2022,BiloBLM2022a,BiloBLM2022b,FriedrichLMS2023,DeligkasEG2023,BiloBDLM02023}. Here, we are given a set of agents and a topology, and our goal is to assign agents to the topology in a desirable way. However, in contrast to TDGs, in Schelling games, the agents are additionally partitioned into types, and the utility of each agent is implicitly derived from the fraction of agents of the same type assigned to its neighborhood. 

A similar situation, that is, agents' utilities are based solely on their neighbors, also appears in \emph{hedonic seat arrangement}~\cite{BodlaenderHJOOZ2020,Ceylan0R2023,Wilczynski2023}, where preferences can be more general, or recently introduced \emph{refugee housing}~\cite{KnopS2023,Schierreich2023,LisowskiS2023}, where we additionally have a subset of agents that are initially assigned to some vertices of the topology, and our task is to assign the remaining agents in a sort of IR manner.

\section{Preliminaries}

For each positive integer $i$, we define $[i]$ to be the set $\{1,\ldots,i\}$. For a set $S$ and a positive integer $k$, we denote by $\binom{S}{k}$ the set of all $k$-sized subsets of $S$, and by $2^S$ we denote the set of all subsets of $S$.

\subsection{Graph Theory} 
We follow the standard graph-theoretical notation~\cite{Diestel2017}. A simple undirected graph $G$ is a pair $(V,E)$, where $V$ is a non-empty set of vertices and $E\subseteq\binom{V}{2}$ is a set of edges. For two vertices $u,v\in V$, we denote by $\dist_G(u,v)$ the length of the shortest path between $u$ and $v$ in the graph $G$, and we set $\dist_G(u,v) = \infty$ if there is no $u,v$-path in $G$.

\subsection{Topological Distance Games}
We use $\agents$ to denote the set of agents.
Each agent $i\in\agents$ is accompanied with a \emph{utility function} $\util_i\colon\agents\to\mathbb{R}$ such that $\util_i(i) = 0$.
We say that agent~$j$ is a \emph{friend} of agent $i$ if $\util_i(j) > 0$. If $\util_i(j) < 0$, the agent~$j$ is an \emph{enemy} of the agent $i$. %
The \emph{enmity graph} is a directed graph on the set $\agents$ of agents such that there is an edge from an agent $i$ to an agent $j$ if and only if $j$ is an enemy of $i$.

The \emph{topology} is a simple undirected graph $G=(V,E)$ with at least $\numagents$ vertices. An \emph{assignment} is an injective mapping $\assgn\colon\agents\to V$ assigning agents to vertices of the topology. The \emph{distance factor function} $\dffn\colon \mathbb{N}\to \mathbb{R}_{> 0}$ is a strictly decreasing function that scales the influence of one agent to another agent based on their distance in the topology. In addition, we define $\dffn(\infty) = 0$.
We further extend the utility function for assignments as follows. Given an assignment $\assgn$, we define its utility $\util_i(\assgn)$ as
\[
\util_i(\assgn) = \sum_{j\in\agents\setminus\{i\}} \util_i(j)\cdot \dffn(\dist_G(\assgn(i),\assgn(j))).
\]

\begin{definition}
	An assignment $\assgn$ is called \emph{individually rational} if for every agent $i\in\agents$ we have $\util_i(\assgn) \geq 0$.
\end{definition}

Now, we are ready to formally define the computational problem of our interest.

\prob{\IRTDG (\IRTDGshort for short)}
{A topology $G$, a set of agents $\agents$, a utility function $\util_i$ for every agent $i\in\agents$, and a distance factor function~$\dffn$.}
{Is there an assignment $\assgn$ that is individually rational?}

\subsection{Parameterized Complexity} 
The framework of parameterized complexity~\cite{Niedermeier2006,DowneyF2013,CyganFKLMPPS2015} gives us formal tools for a finer-grained complexity of computational problems that are assumed to be intractable in their full generality. Informally, under this problem, we study variants of intractable problems that are somehow restricted, and this restriction is captured in the so-called \emph{parameter}~$k$. The ultimate goal is then to invent algorithms such that the exponential blow-up in the running time can be confined to the parameter and not to the input size. 
In this direction, the best possible outcome is an algorithm running in $g(k)\cdot n^\Oh{1}$ time for any computable function $g$. Such an algorithm is called \emph{fixed-parameter algorithm}, and \FPT is the class of all parameterized problems that admit a fixed-parameter algorithm. Slightly worse, but still positive, is an algorithm running in $g(k)\cdot n^{h(k)}$ time, where~$g,h$ are computable functions. The complexity class containing all parameterized problems admitting such algorithms is called \XP. One can rule out the existence of a fixed-parameter algorithm by proving that the problem of interest is \Wh. This can be shown by a parameterized reduction from some other \Wh parameterized problem; \textsc{Independent Set} parameterized by the solution size is a prototypical \Wh problem.
Similarly, one can exclude the existence even of an \XP algorithm by showing that the parameterized problem is \NPh already for a constant value of the parameter. Such a problem is then called \pNPh.
For a more comprehensive introduction to the parameterized complexity, we refer the interested reader to the monograph of~\cite{CyganFKLMPPS2015}.

\paragraph{Exponential-Time Hypothesis.} 
The \emph{Exponential-Time Hypothesis} (ETH) of Impagliazzo et al.~\cite{ImpagliazzoP2001,ImpagliazzoPZ2001} is a well-established assumption in theoretical computer science that, informally speaking, states that there cannot exist an algorithm for the $3$-SAT problem with running time which is sub-exponential in the number of variables.

\section{Unrestricted number of agents}
\label{sec:agents_input}

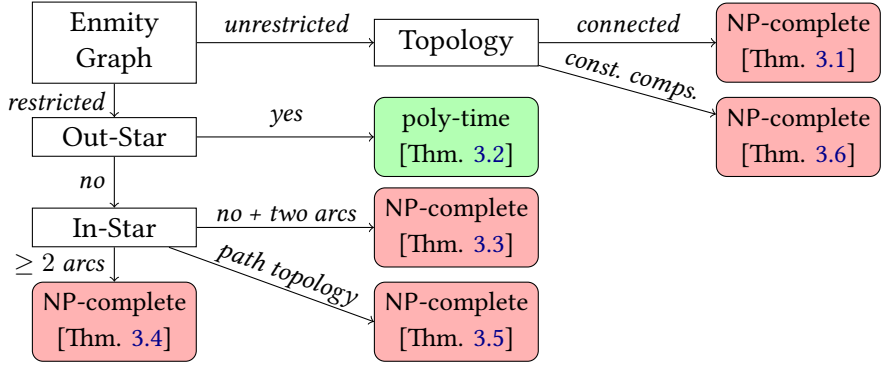
\begin{figure}
	\centering
	\begin{tikzpicture}[align=center,node distance=1.2cm]
		\tikzstyle{decision} = [draw,align=center,text width=1.9cm];
		\tikzstyle{result} = [draw,rectangle,rounded corners,align=center,text width=1.9cm,node distance=4.5cm];
		\tikzstyle{NPh} = [fill=red!30];
		\tikzstyle{poly} = [fill=green!30];
		
		\node[decision] (d1) at (0,0) {Enmity Graph};
		\node[decision,below of=d1,node distance=1.25cm] (d2) {Out-Star};
		\node[decision,below of=d2] (d3) {In-Star};
		
		\node[decision,right of=d1,node distance=4.5cm] (d4) {Topology};
		\node[result,NPh,right of=d4] (p1) {\small\NPc\\\small[Thm.~\ref{thm:unrestricted_ennmity}]};
		
		\node[result,poly,right of=d2] (p2) {\small poly-time\\\small[Thm.~\ref{thm:IRTDG:P:singleArcEnmityGraph}]};
		\node[result,NPh,right of=p2] (p6) {\small\NPc\\\small[Thm.~\ref{thm:const_components}]};
		
		\node[result,NPh,right of=d3] (p5) {\small\NPc\\\small[Thm.~\ref{thm:IRTDG:NPc:enmityGraphWithTwoArcs}]};
		\node[result,NPh,below of=d3,node distance=1.25cm] (p3) {\small\NPc\\\small[Thm.~\ref{thm:two_arcs_same}]};
		\node[result,NPh,below of=p5,node distance=1.25cm] (p4) {\small\NPc\\\small[Thm.~\ref{thm:NPc_path}]};
		
		\draw[->] (d1) -- (d4) node [above,midway] {\it\small unrestricted};
		\draw[->] (d4) -- (p1) node [above,midway] {\it\small connected};
		\draw[->] (d4) -- (p6) node [above,midway,sloped,yshift=-3pt] {\it\small const. comps.};
		\draw[->] (d1) -- (d2) node [left,midway] {\it\small restricted};
		\draw[->] (d2) -- (p2) node [above,midway] {\it\small yes};
		\draw[->] (d2) -- (d3) node [left,midway] {\it\small no};
		\draw[->] (d3) -- (p3) node [left,midway] {\it\small$\geq 2 \text{ arcs}$};
		\draw[->] (d3) -- (p4.west) node [above,midway,sloped,xshift=0.15cm,yshift=-0.1cm] {\it\small path topology};
		\draw[->] (d3) -- (p5) node [above,midway,yshift=-0.05cm] {\it\small no + two arcs};
	\end{tikzpicture}
	\caption{A simplified overview of our results when the number of agents is part of the input. By \emph{const. comps.} we mean that the topology is disconnected and consists of constant-size components.}
	\label{fig:results:agentsInput}
\end{figure}

As our first result, we show that it is easy to verify whether a given assignment $\assgn$ is individually rational or not.

\begin{observation}\label{obs:IRTDG:NP}
	The \IRTDG problem is in \NP.
\end{observation}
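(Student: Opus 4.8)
The plan is to exhibit a polynomial-size certificate together with a polynomial-time verifier. The natural certificate is the assignment $\assgn$ itself, i.e.\ the injective mapping $\assgn\colon\agents\to V$, which can be written down as a list of $\numagents$ pairs $(i,\assgn(i))$ and is therefore of size polynomial in the input. Given such a certificate, the verifier first checks that $\assgn$ is indeed a well-defined injective map into $V(G)$ (each agent assigned to exactly one vertex, no two agents sharing a vertex). This is a trivial syntactic check.

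Next I would verify individual rationality directly from the definition. For every agent $i\in\agents$, the verifier computes $\util_i(\assgn) = \sum_{j\in\agents\setminus\{i\}} \util_i(j)\cdot \dffn(\dist_G(\assgn(i),\assgn(j)))$ and checks that it is non-negative. Computing all pairwise distances $\dist_G(\assgn(i),\assgn(j))$ takes polynomial time, e.g.\ by a BFS from each vertex of $G$ (or a single all-pairs shortest-path computation); each value $\dffn(d)$ can be obtained in polynomial time since $\dffn$ is part of the input and we only ever evaluate it on the $\Oh{\numagents^2}$ many distances that actually occur, each of which is at most $|V(G)|$ or $\infty$ (for which $\dffn(\infty)=0$ by definition). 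Summing $\numagents-1$ terms per agent and comparing to $0$ is clearly polynomial. If the check passes for all agents, the verifier accepts; otherwise it rejects.

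Correctness is immediate: if an individually rational assignment exists, that assignment is a certificate the verifier accepts; conversely, if the verifier accepts a certificate $\assgn$, then by construction $\util_i(\assgn)\geq 0$ for every $i$, so $\assgn$ witnesses a yes-instance. I do not anticipate any real obstacle here; the only mild point worth a sentence is the representation of $\dffn$ and of the (rational) utility values, so that arithmetic with them is polynomial in the input encoding — but this is a standard assumption, and since the function is only queried on the polynomially many integer arguments $1,\dots,|V(G)|$ and on $\infty$, its relevant part has polynomial size.
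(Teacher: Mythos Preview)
Your proposal is correct and follows essentially the same approach as the paper: use the assignment $\assgn$ as the certificate and verify in polynomial time that every agent's utility $\util_i(\assgn)$ is non-negative. The paper's own proof is just a terser version of what you wrote, without the explicit discussion of BFS, injectivity checks, or the encoding of $\dffn$.
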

\begin{proof}
	Given an assignment $\assgn$ and an agent $i\in\agents$, we can compute in polynomial time the utility $\util_i(\assgn)$ and check whether this value is non-negative. As there are finitely many agents, the problem is clearly in \NP.
\end{proof}

From this point onward, we will not discuss the belonging to \NP separately in any of our \NPcness proof.

In our first negative result we show that the problem is intractable even if we severely restrict the utilities of the agents.

\begin{theorem}\label{thm:unrestricted_ennmity}
	For every distance factor function $\dffn$, it is \NPc to decide the \IRTDG problem even if the utilities are symmetric and every agent uses at most $2$ different utility values.
\end{theorem}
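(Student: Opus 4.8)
I would prove \NPhness by reduction from a standard graph problem — most naturally \textsc{Independent Set} or, perhaps more conveniently here, \textsc{3-Coloring} or \textsc{Clique} — but the most promising route given the symmetric, two-values-per-agent restriction is a reduction from \textsc{Independent Set} (or its complement, \textsc{Clique}). The idea is to encode a graph $H = (V_H, E_H)$ and a target $k$ into an \IRTDG instance so that an individually rational assignment exists if and only if $H$ has an independent set of size $k$. The key modeling device is that the distance factor function $\dffn$ is \emph{fixed in advance and arbitrary}; since $\dffn$ is strictly decreasing with $\dffn(\infty)=0$, we know $\dffn(1) > \dffn(2) > \cdots > 0$, and the reduction must work no matter what these values are. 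This means we cannot tune the numbers to $\dffn$; instead we must use the \emph{topology} to control which distances are even achievable, so that only the combinatorial structure matters.

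\textbf{Construction sketch.} The plan is to build a topology consisting of $k$ ``slots'' together with a vertex-selection gadget: roughly, for each vertex $v \in V_H$ create an agent $a_v$, and design the topology so that in any assignment each agent is placed into one of $k$ pairwise ``far'' regions, and two agents in the same region are forced to be at distance $1$ (adjacent), while agents in different regions are at some large distance $D$ where $\dffn(D)$ is negligible relative to $\dffn(1)$ — but since $\dffn$ is adversarial, ``negligible'' must really mean \emph{zero}, i.e.\ distance $\infty$, achieved by disconnecting the regions or by a topology with large enough diameter that the relevant vertices lie in different components. Give each pair of non-adjacent vertices $u,v$ in $H$ a small mutual positive utility and each adjacent pair a large mutual negative utility, arranged so that each agent uses only two distinct nonzero values (one positive, one negative); the symmetry of $H$'s adjacency gives symmetry of the utilities for free. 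Then placing agents $a_u, a_v$ at distance $1$ with $\{u,v\}\in E_H$ creates a negative-utility interaction that cannot be compensated, so the $k$ agents occupying the selection region must form an independent set; a further set of ``dummy'' agents and a counting/saturation gadget forces exactly $k$ agents into that region (or forces all of $V_H$ to be placed, with only an independent subset going ``close together''). One must take care that the at-most-two-values restriction is genuinely respected for \emph{every} agent, which typically requires auxiliary agents whose sole purpose is to absorb or provide a fixed compensating value; these are the fiddly part of the write-up.

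\textbf{Main obstacle.} The hard part will be making the reduction \emph{uniform in $\dffn$}: because the distance factor function is fixed but arbitrary and only known to be strictly decreasing and positive, we cannot exploit any quantitative relation such as $\dffn(1) \gg 2\dffn(2)$. The robust workaround is to design the topology so that the \emph{only} distances that ever occur between agents with nonzero mutual utility are $1$ and $\infty$ (equivalently, same component adjacent vs.\ different component), so that $\util_i(\assgn)$ for each agent reduces to $\dffn(1)$ times a purely combinatorial sum of $+1$'s and (scaled) $-1$'s. Then individual rationality becomes a statement about signs of integer combinations only, independent of the actual value $\dffn(1) > 0$. Achieving this while also (i) keeping the topology itself simple, (ii) forcing the right number of agents into the ``close'' region, and (iii) respecting the two-values-per-agent budget simultaneously is the delicate engineering; once the distances are pinned to $\{1,\infty\}$, correctness in both directions (an independent set yields an IR assignment by placing its agents mutually non-adjacent, and an IR assignment yields an independent set because no negative edge can be ``hidden'' at a harmless finite distance) should follow by a direct case analysis. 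Membership in \NP is already given by \Cref{obs:IRTDG:NP}.
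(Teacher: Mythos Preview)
Your approach is viable and takes a genuinely different route from the paper. Both share the key insight you correctly isolate: force all relevant inter-agent distances to be either $1$ or $\infty$, so that every agent's utility collapses to $\dffn(1)$ times a purely combinatorial sign-sum and the reduction becomes uniform in~$\dffn$. The paper, however, reduces from \textsc{Unary Bin Packing} rather than \textsc{Independent Set}: its topology is a disjoint union of $B$ cliques each of size $c$; for each item $s_i$ it creates $s_i$ agents who value one another at $\frac{c-s_i}{s_i-1}$ and every foreign agent at $-1$; IR then forces each item's block of agents into a single clique, and the clique sizes enforce the bin-capacity constraint. Your \textsc{Independent Set} idea can in fact be carried out more simply than you sketch, and even yields only two utility values \emph{globally}: take the topology to be one clique $K_k$ together with $|V_H|-k$ isolated vertices, set $\util_{a_u}(a_v)=1$ if $\{u,v\}\notin E_H$ and $\util_{a_u}(a_v)=-(k-1)$ otherwise, and observe that because $|V(G)|=|\agents|$ exactly $k$ agents must land in the clique; those $k$ are IR if and only if no two are $H$-adjacent.

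The one soft spot in your proposal is the appeal to ``dummy agents'' and a ``counting/saturation gadget''. These are precisely where you would risk breaking the at-most-two-values constraint, and they are unnecessary: the counting is enforced for free by taking the number of topology vertices equal to the number of agents, which the model permits. Once you drop those phantom gadgets your construction is arguably cleaner than the paper's; on the other hand, the paper's Bin Packing template with many equal-size cliques is what extends naturally to results like \Cref{thm:const_components}, where the topology must consist of constant-size components.
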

\begin{proof}
	We provide a reduction from the \textsc{Unary Bin Packing} problem~\cite{GareyJ1979}. The input of this problem is a list $S=(s_1,\ldots,s_n)$ of positive integers given in unary, the number of bins $B$, and a capacity $c$ of bins. The question is then whether there exist an allocation $\alpha\colon S\to [B]$ such that $\forall j\in[B]\colon \sum_{i\in[n]\colon \alpha(s_i) = j} s_i = c$. 
	The problem remains \NPc even if $\forall i \in [n]\colon s_i > 1$, $\sum_{i\in[n]} s_i = B\cdot c$.
	
	Given an instance $\mathcal{I} = (S,B,c)$ of \textsc{Unary Bin Packing}, we construct an equivalent instance $\mathcal{J}$ of the \IRTDG problem as follows. We start with the topology $G$, which is a disjoint union of~$B$ cliques $C_1,\ldots,C_B$, each of size~$c$. Since all vertices are in distance either one or infinity, the distance factor function $\dffn$ can be arbitrary. For the sake of exposition, we assume that $\dffn(1) = 1$. Next, we define the agents and the utilities. For every item $s_i \in S$, we create $s_i$ agents $a_{i,1},\ldots,a_{i,s_i}$. The utility function of these agents is the same and is constructed such that these agents have to be part of the same clique; otherwise, their utility is necessarily negative. Specifically, we set $\util_{a_{i,j}}(a_{\ell,k}) = -1$, where $i\in[n]$, $j\in[s_i]$, $\ell\in[n]\setminus i$, and $k\in[s_\ell]$, and $\util_{a_{i,j}}(a_{i,\ell}) = \frac{c-s_i}{s_i-1}$, where $i\in[n]$ and $j,\ell\in[s_i]$.
	Observe that the number of vertices of $G$ and the number of agents is the same. 
	The utilities are indeed symmetric, and every agent uses $2$ different values in the utility function.
	
	For correctness, assume first that $\mathcal{I}$ is a \emph{yes}-instance and $\alpha$ is a correct allocation of items. We create an assignment $\assgn$ as follows. For every item $s_i\in S$, we assign the corresponding agents $a_{i,1},\ldots,a_{i,s_i}$ arbitrarily to empty vertices of the clique $C_{\alpha(s_i)}$. Since $\alpha$ is a valid allocation, there are enough empty vertices for each agent. Also, observe that all agents corresponding to a single item $s_i$ are assigned to the same clique, and therefore, for every $j\in[s_i]$, it holds that
	\begin{align*}
		\util_{a_{i,j}}(\assgn) &= (c-s_i)\cdot(-1) + (s_i - 1)\cdot\frac{(c-s_i)}{(s_i-1)}\\
		&= -c + s_i + c - s_i = 0.
	\end{align*}
	Hence, the utility of every agent is $0$ with respect to $\assgn$. Therefore, the assignment $\assgn$ is individually rational, and $\mathcal{J}$ is also a \emph{yes}-instance.
	
	In the opposite direction, let $\mathcal{J}$ be a \emph{yes}-instance and $\assgn$ be an individually rational assignment. First of all, we prove an auxiliary claim stating that all agents corresponding to the same item are in the same clique.
	
	\begin{claim}\label{clm:IRTDG:NPc:symmetricAtMostTwoPerAgent:allInSameComp}
		Let $a_{i,j}$ and $a_{i,\ell}$, where $i\in[n]$ and $j,\ell\in[s_i]$ are distinct, be two agents. In every individually rational assignment $\assgn$, it holds that $\dist(\assgn(a_{i,j}),\assgn(a_{i,\ell})) = 1$.
	\end{claim}
	\begin{claimproof}
		For the sake of contradiction, assume that $\assgn$ is individually rational and $\dist(\assgn(a_{i,j}),\assgn(a_{i,\ell})) \not= 1$. Since all the components of~$G$ are cliques, it follows that $\dist(\assgn(a_{i,j}),\assgn(a_{i,\ell})) = -\infty$. Consequently, we have 
		\begin{align*}
			\util_{a_{i,j}}(\assgn) &\leq (c- s_i + 1)\cdot (-1) + (s_i - 2)\cdot\frac{(c-s_i)}{(s_i-1)} \\
			&= -c\cdot s_i + s_i^2 - s_i + (s_i - 2)\cdot(c-s_i) \\
			&= -c\cdot s_i + s_i^2 - s_i + c\cdot s_i - s_i^2 - 2c + 2s_i \\
			&= -2c + s_i \leq -c,
		\end{align*}
		and therefore, such $\assgn$ is not individually rational. Thus, in every $\assgn$, all $a_{i,j}$ and $a_{i,\ell}$ are in distance one.
	\end{claimproof}
	
	With \Cref{clm:IRTDG:NPc:symmetricAtMostTwoPerAgent:allInSameComp} in hand, we can directly create a valid allocation $\alpha$ for $\mathcal{I}$. In particular, we set $\alpha(s_i)$ equal to the index of the clique the agent $a_{i,1}$ is assigned to. Clearly, all items are allocated. Additionally, assume that there is a bin $j\in[B]$ such that $\sum_{i\in[n]\colon \alpha(s_i) = j} s_i < c$. Then for $K_j$ it holds that $\sum_{i\in[n]\colon\assgn(a_{i,1})\in V(K_j)} s_i < c$, which is, however, not possible as all vertices are occupied and, by \Cref{clm:IRTDG:NPc:symmetricAtMostTwoPerAgent:allInSameComp}, all agents corresponding to $s_i$ are part of the same clique. Therefore, $\alpha$ is solution for $\mathcal{I}$.
	
	The reduction can be clearly done in polynomial time. Together with \Cref{obs:IRTDG:NP}, we obtain that \IRTDGshort is \NPc.
\end{proof}

Then, we prove that the problem is tractable when there is at most one agent that has enemies.

\begin{theorem}\label{thm:IRTDG:P:singleArcEnmityGraph}
	If there is at most one agent $p$ assigning negative utility to other agents, the \IRTDG problem can be solved in polynomial time for any distance factor function $\dffn$.
\end{theorem}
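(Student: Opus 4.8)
The plan is to first notice that the individual-rationality constraint is vacuous for every agent other than $p$. If an agent $i\neq p$ assigns no negative utilities, then for every assignment $\assgn$ the value $\util_i(\assgn)=\sum_{j\neq i}\util_i(j)\cdot\dffn(\dist_G(\assgn(i),\assgn(j)))$ is a sum of products of non-negative numbers (recall $\dffn>0$ and $\dffn(\infty)=0$), hence $\util_i(\assgn)\ge 0$ automatically. So the problem reduces to deciding whether there is an (injective) assignment $\assgn$ with $\util_p(\assgn)\ge 0$; since some assignment always exists ($|V|\ge\numagents$), it is enough to compute $\max_{\assgn}\util_p(\assgn)$ and compare it with $0$.

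Next I would brute-force the vertex $v=\assgn(p)$: there are only $|V|$ choices. Once $v$ is fixed, $\util_p(\assgn)=\sum_{j\neq p}\util_p(j)\cdot\dffn(\dist_G(v,\assgn(j)))$ depends only on how the remaining $\numagents-1$ agents are distributed over the distance classes ("shells") $S_d=\{w:\dist_G(v,w)=d\}$ around $v$ (including $S_\infty$), which a single BFS reveals. Maximizing $\util_p$ subject to $\assgn(p)=v$ is then a weighted bipartite assignment problem between the $\numagents-1$ remaining agents and the $|V|-1$ remaining vertices, with weight $\util_p(j)\cdot\dffn(\dist_G(v,w))$ on the pair $(j,w)$, solvable in polynomial time; iterating over $v$ and answering \emph{yes} as soon as some $v$ yields a maximum value $\ge 0$ gives the claimed polynomial-time algorithm.

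For a more transparent, self-contained argument I would instead exhibit an optimal assignment for fixed $v$ explicitly and prove optimality by exchange arguments. Split the remaining agents into friends $F=\{j:\util_p(j)>0\}$, enemies $E=\{j:\util_p(j)<0\}$, and neutrals $O=\{j:\util_p(j)=0\}$. The claim is that it is optimal to place the friends on the $|F|$ vertices closest to $v$ with a higher-utility friend never farther than a lower-utility one, the enemies on the $|E|$ vertices farthest from $v$ with a more negative enemy never closer than a less negative one, and the neutrals together with the $|V|-\numagents$ unused vertices in the remaining "middle" — which is possible since $|F|+|E|+|O|=\numagents-1\le |V|-1$. Each part follows from a local swap that never decreases $\util_p$: moving a friend to a strictly closer empty vertex; swapping a friend with a farther enemy or a farther neutral; swapping two misordered friends (and dually for enemies), using only that $\dffn$ is strictly decreasing and $\dffn(\infty)=0$. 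Crucially, the friend part and the enemy part of $\util_p$ are optimized simultaneously, because pulling friends towards $v$ frees exactly the far vertices the enemies want.

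The routine work is the BFS, the sorting of $\util_p$-values, and the final arithmetic. The one point that needs care is making the exchange argument airtight — in particular, that the friend block and the enemy block never need to overlap, and that rearranging within $F$ (resp.\ within $E$) is exactly governed by the rearrangement inequality once the multiset of used distances is fixed. I expect this bookkeeping, rather than any genuinely hard idea, to be the main obstacle, and it can be sidestepped entirely by just invoking polynomial-time solvability of the assignment problem for each fixed $v$.
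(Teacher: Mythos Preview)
Your proposal is correct and follows essentially the same approach as the paper: iterate over the position of $p$, run BFS to get the distance classes, and then greedily place friends of $p$ on the closest vertices and enemies on the farthest ones, justified by the same local-swap/exchange argument. Your extra observation that, for fixed $v$, the optimization is a weighted bipartite assignment problem is a clean alternative the paper does not mention, but the core algorithm and its analysis coincide.
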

\begin{proof}
	If there is no arc in the enmity graph, then all the utilities are positive, and therefore, every assignment is individually rational. Thus, let there be at least one arc in the enmity graph. We can split the agents into two sets $\agents^+$ and $\agents^-$ according to the utility the agent $p$ has for them. 
	Formally, we set $\agents^+ = \{i\in\agents\setminus\{p\}\mid \util_p(i) \geq 0\}$ and $\agents^- = \{i\in\agents\setminus\{p\}\mid \util_p(i) < 0\}$, respectively.
	Then, we try all possible assignments of agent~$p$ to the topology, and for every possibility, we do the following. Let~$v$ be the vertex the agent~$p$ is assigned in the currently examined possibility. We run the Breadth-first search algorithm starting with the vertex $v$ to find a BFS-tree~$T$. Now, we do a level order traversal of the tree~$T$, and for each vertex~$u$ of~$T$, we assign to~$u$ an agent $i\in\agents^+$ that was not assigned before, and the agent~$p$ has for it the highest utility between all agents in $\agents^+$. As the final step, we assign the agents from~$\agents^-$. This is again done by the level-order traversal with the following differences. The traversal is done from the deepest level, and the agents are assigned according to the increasing utility that the agent $p$ has for them. If, for this assignment, the utility of~$i$ is non-negative, we return \emph{yes} and exit the algorithm. Otherwise, we will continue with another possibility. If no possibility leads to an individually rational assignment, we return \emph{no}.
	
	For correctness, we prove that if there is at least one individually rational assignment, there is also a solution of the form that our algorithm checks. Let $\assgn$ be an IR assignment and assume that~$j\in\agents^-$ such that $\util_p(j)$ is minimum is not in the farthest possible vertex, say $u$, from $\assgn(p)$. Then, by moving $j$ from $\assgn(j)$ to $u$, we can even increase the value of $\util_p(\assgn)$ since the distance factor function $\dffn$ is strictly decreasing. Similarly, it can be shown that by moving agents for which the agent~$i$ has positive utility, we can only increase the utility of~$p$ in~$\assgn$. Therefore, our algorithm exhaustively tries all relevant solutions and is clearly correct.
	
	Overall, we try $\Oh{|V(G)|}$ possible positions of~$p$, and for every position, we run BFS and assign the remaining agents to the graph. Therefore, the overall running time is~$\Oh{|V(G)^3|}$.
\end{proof}

In our next result, we show that a single arc in the enmity graph (cf.~\Cref{thm:IRTDG:P:singleArcEnmityGraph}) is basically the only restriction that makes the problem tractable. Specifically, in our next result, we show that if there are two arcs in the enmity graph, the problem becomes intractable.

The \NPhness is proven via a reduction from the \textsc{Equitable Partition} problem~\cite{GareyJ1979}. 
In fact, we start with this problem in Theorems~\ref{thm:two_arcs_same} and~\ref{thm:NPc_path} as well.
In this problem, we are given a list $S=(s_1,\ldots,s_{2n})$ of $2n$ positive integers such that $\sum_{i\in[2n]} s_i = 2k$, and the goal is to decide whether there exists a set $I\subseteq[2n]$ of size $n$ such that $\sum_{i\in I} s_i = \sum_{i\in [2n]\setminus I} s_i = k$. Without loss of generality, we can assume that $\min S \geq n^2$ and that for any $i,j\in [2n]$ we have $|s_i - s_j|\le \frac{\min S}{n^2}$~\cite{DeligkasEKS2024}.
In particular, this means that for any $J\subseteq [2n]$ with $|J|<n$, we have $\sum_{i\in J}s_i < k$.

\begin{theorem}\label{thm:IRTDG:NPc:enmityGraphWithTwoArcs}
	For every distance factor function $\dffn$, it is \NPc to decide the \IRTDG problem even if the enmity graph contains only two arcs and the utilities are symmetric. 
\end{theorem}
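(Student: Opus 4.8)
The plan is to reduce from \textsc{Equitable Partition} with the normalisation of the item sizes recalled just before the statement (so $\min S\ge n^2$, $|s_i-s_j|\le \min S/n^2$, and in particular $\sum_{i\in J}s_i<k$ for every $J\subsetneq[2n]$ with $|J|<n$). As the utilities must be symmetric, the two arcs of the enmity graph form a single mutually hostile pair $\{x,y\}$, and the construction shapes the topology so that the individual-rationality inequalities of $x$ and of $y$ \emph{jointly} encode ``the items placed on a designated half of the topology have total weight exactly $k$''.

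\textbf{Construction.} From $S=(s_1,\dots,s_{2n})$ with $\sum_i s_i=2k$, build a topology $G$ on $2n+2$ vertices: two \emph{pivots} $x^\star,y^\star$ and two vertex sets $C_x,C_y$ with $|C_x|=|C_y|=n$; make $C_x\cup C_y$ a clique, join $x^\star$ to all of $C_x$ and $y^\star$ to all of $C_y$, and add nothing else, so that $\dist_G(x^\star,y^\star)=3$ and all occurring distances lie in $\{1,2,3\}$. The agents are $x$, $y$, and one agent $a_i$ per item, hence $\numagents=|V(G)|$. The symmetric utilities are $\util_x(y)=\util_y(x)=-\Lambda$ with $\Lambda:=k(\dffn(1)+\dffn(2))/\dffn(3)$, then $\util_x(a_i)=\util_y(a_i)=\util_{a_i}(x)=\util_{a_i}(y)=s_i$, and $\util_{a_i}(a_j)=0$ for $i\neq j$. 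The enmity graph is exactly $\{(x,y),(y,x)\}$, and $\Lambda>0$ is computable from the fixed $\dffn$ using only $\dffn(1)>\dffn(2)>\dffn(3)>0$.

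\textbf{The calibration identity and the easy direction.} If $x$ sits on $x^\star$ and $y$ on $y^\star$, then all $2n$ item-agents lie in $C_x\cup C_y$, exactly $n$ of them in each; writing $W$ for the total weight of the items in $C_x$, the choice of $\Lambda$ gives after cancellation
\[
\util_x(\assgn)=(\dffn(1)-\dffn(2))(W-k),\qquad \util_y(\assgn)=(\dffn(1)-\dffn(2))(k-W),
\]
so this assignment is individually rational iff $W=k$, i.e.\ iff the $n$ items in $C_x$ form an equitable partition (the size-$n$ requirement being automatic since $|C_x|=n$). In particular an equitable partition $I$ yields the assignment $x\mapsto x^\star$, $y\mapsto y^\star$, $I\mapsto C_x$, $[2n]\setminus I\mapsto C_y$, in which $x$ and $y$ have utility $0$ and each $a_i$ has utility $s_i(\dffn(1)+\dffn(2))>0$ (its only non-zero utilities being to $x$ and $y$, at distances $1$ and $2$).

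\textbf{Hard direction and the main obstacle.} Every item-agent has only non-negative utilities, so individual rationality can fail only at $x$ or $y$, and it suffices to enumerate the placements of the pair $\{x,y\}$. If both are on pivots we are in the calibrated case above. Otherwise one of them sits inside $C_x\cup C_y$; if this forces $x$ and $y$ to distance $1$ (one of them in $C_x\cup C_y$ and the other on an incident pivot or also in $C_x\cup C_y$), then the hostile term is $-\Lambda\dffn(1)$ while the visible friendly weight is at most $2k$, so the utility is at most $\dffn(1)(2k-\Lambda)<0$ because $\Lambda>2k$ (equivalently $\dffn(1)+\dffn(2)>2\dffn(3)$). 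The genuinely delicate configuration is when $x$ and $y$ end up at distance exactly $2$ (e.g.\ $x$ on $x^\star$ with $y$ in $C_y$): the hostile term $-\Lambda\dffn(2)$ is then milder than the calibrated $-\Lambda\dffn(3)$, and one must show the constrained agent still cannot muster enough friendly weight. Here leaving a pivot pushes one item-agent onto a pivot and shrinks one near-clique to $n-1$ item-agents, whose weight is $<k$ by the normalisation; combining this with $\min S\ge n^2$ and $|s_i-s_j|\le \min S/n^2$ one checks that the utility is negative for every strictly decreasing $\dffn$, provided $n$ exceeds a constant depending on $\dffn$ (governed by the ratio $(\dffn(1)+\dffn(2))(\dffn(2)-\dffn(3))/(\dffn(3)(\dffn(1)-\dffn(2)))$), while instances with $n$ below that constant are decided by brute force since Equitable Partition is polynomial for bounded~$n$. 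Making this last family of inequalities work uniformly over all admissible $\dffn$ is the only real difficulty; the rest is bookkeeping. Together with \Cref{obs:IRTDG:NP} and the obvious polynomiality of the reduction, this gives \NPcness even under a two-arc enmity graph with symmetric utilities.
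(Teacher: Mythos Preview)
Your reduction is essentially the same as the paper's: both start from \textsc{Equitable Partition}, build a diameter-$3$ topology with two ``pivot'' vertices attached to opposite halves of a $2n$-vertex core, introduce two mutually hostile guards plus $2n$ item-agents with weight~$s_i$, and calibrate the single negative utility so that in the intended configuration each guard's utility collapses to $(\dffn(1)-\dffn(2))(W-k)$. The only cosmetic difference is that you make the core a clique $K_{2n}$ while the paper uses $K_{n,n}$; since item-agents have zero mutual utility, the internal distances in the core are irrelevant and the two topologies behave identically.

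One small prose slip: at distance~$2$ the hostile term $-\Lambda\dffn(2)$ is \emph{harsher}, not milder, than the calibrated $-\Lambda\dffn(3)$ (since $\dffn(2)>\dffn(3)$). What actually makes the case delicate is that the agent \emph{inside} the clique now sees all items at distance~$1$ and may well be happy; the burden falls on the agent left on the pivot, and for that agent the issue is not that ``one near-clique shrinks to $n-1$ items with weight $<k$'' but rather that the $n$ items on its own side can exceed~$k$ by at most $O(\mu/n)$ under the normalisation, which is too little to offset the extra $(\dffn(2)-\dffn(3))\Lambda$ penalty. Your threshold ratio $(\dffn(1)+\dffn(2))(\dffn(2)-\dffn(3))/(\dffn(3)(\dffn(1)-\dffn(2)))$ is exactly right (it equals $c-1$ where $W_1\ge ck$ would be needed, and one checks $c>1$ for every strictly decreasing~$\dffn$), so the argument goes through once the description is tidied up.
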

\begin{proof}
	Given an instance $S$ of the \textsc{Equitable Partition} problem, we construct an equivalent instance $\mathcal{J}$ of the \IRTDG problem as follows. First, we construct the topology $G$. At the beginning, we create a complete bipartite graph $K_{n,n}$ with two parts $L$ and $R$. Then, we add a vertex $v_\ell$, which is connected with all vertices of the part $L$, and a vertex $v_r$, which is connected with all vertices of the part $R$. The set of agents consists of $2n$ \emph{element-agents}, each corresponding to one element of the set $S$, and two \emph{guard-agents} $g_1$, $g_2$. The idea behind the construction is that the guards hate each other, and the only way to make their utility non-negative is to assign to vertices~$v_\ell$ and~$v_r$, respectively, and to partition the element-agents between two parts of $K_{n,n}$ such that utility the agents~$g_1$ and $g_2$ gain from neighboring agents is exactly~$k$. To ensure this, we define the utilities as follows. Let $\dffn$ be an arbitrary but fixed distance factor function. For the guard-agents, we set $\util_{g_1}(g_2) = \util_{g_2}(g_1) = -(k + \frac{\dffn(2)}{\dffn(1)}\cdot k)/\dffn(2)$. Next, let $s_i$, $i\in[2n]$, be an element-agent. We set $\util_{g_1}(a_i) = \util_{g_2}(a_i) = \util_{a_i}(g_1) = \util_{a_i}(g_2) = \frac{s_i}{\dffn(1)}$. The remaining utilities, that is, between element-agents, are zero.
	
	For correctness, let $\mathcal{I}$ be a \emph{yes}-instance and $I\subseteq[2n]$ be a solution partition. We construct a solution assignment $\assgn$ for~$\mathcal{J}$ as follows. First, we set $\assgn(g_1) = v_\ell$ and $\assgn(g_2) = v_r$. Next, for every $i\in I$, we assign the element-agent $a_i$ to an empty vertex of $L$, and for every $i\in[2n]\setminus I$, we assign $a_i$ to an empty vertex of $R$. Since $I$ is an equitable partition, that is, $|I| = n$, there is always an empty vertex for each element-agent. Moreover, we have that
	\begin{align*}
		\util_{g_1}(\assgn) &= 
		\sum_{i\in I} \dffn(1)\cdot \frac{s_i}{\dffn(1)} 
		+ \sum_{\mathclap{i\in[2n]\setminus I}} \dffn(2)\cdot \frac{s_i}{\dffn(1)} 
		+ \dffn(3)\cdot \util_{g_1}(g_2)\\
		&= k + \frac{\dffn(2)}{\dffn(1)}\cdot k + \dffn(2)\cdot \util_{g_1}(g_2)\\
		&= k + \frac{\dffn(2)}{\dffn(1)}\cdot k 
		+ \dffn(2)\cdot \frac{-(k + \frac{\dffn(2)}{\dffn(1)}\cdot k)}{\dffn(3)}\\
		&= k + \frac{\dffn(2)}{\dffn(1)}\cdot k - k - \frac{\dffn(2)}{\dffn(1)}\cdot k = 0.
	\end{align*}
	And the same holds for the guard-agent $g_2$ by symmetric arguments. Moreover, the utility of every element-agent is clearly non-negative as they get non-negative values from all remaining agents. Therefore, $\assgn$ is individually rational and $\mathcal{J}$ is also a \emph{yes}-instance.
	
	In the opposite direction, let $\mathcal{J}$ be a \emph{yes}-instance and $\assgn$ be an individually rational assignment. First of all, we show that the guard agents $g_1$ and $g_2$ are necessarily at a distance $3$, as otherwise, their utility would be negative.
	
	\begin{claim}
		For every individually rational assignment $\assgn$, we have $\dist(\assgn(g_1),\assgn(g_2)) = 3$.
	\end{claim}
	\begin{claimproof}
		For the sake of contradiction, let $\assgn$ be individually rational and assume that $\dist(\assgn(g_1),\assgn(g_2)) = 1$. Then the utility of agent $g_1$ (and by same arguments also of $g_2$) is at most
		\begin{align*}
			\util_{g_1}(\assgn) &= \dffn(1)\cdot\sum_{\substack{i\in [2n]\\\dist(\assgn(g_1),\assgn(a_i))=1}} \frac{s_i}{\dffn(1)} + \dffn(2)\cdot\sum_{\substack{i\in [2n]\\\dist(\assgn(g_1),\assgn(a_i))=2}} \frac{s_i}{\dffn(1)} + \dffn(1)\cdot\util_{g_1}(g_2)\\
			&< k+\max S + \frac{\dffn(2)}{\dffn(1)}\cdot(k-\max S) + \dffn(1)\cdot\util_{g_1}(g_2)\\
			&= k+\max S + \frac{\dffn(2)}{\dffn(1)}\cdot k - \frac{\dffn(2)}{\dffn(1)}\cdot\max S-\frac{\dffn(1)}{\dffn(3)}\cdot k - \frac{\dffn(2)}{\dffn(3)}\cdot k\\
			&= k\cdot\left(\frac{\dffn(1)\dffn(3)+\dffn(2)\dffn(3)-\dffn(1)^2-\dffn(1)\dffn(2)}{\dffn(1)\dffn(3)}\right) + \max S\cdot\left(\frac{\dffn(1)-\dffn(2)}{\dffn(1)}\right)\\
			&= k\cdot \alpha + \max S\cdot \beta,
		\end{align*}
		which is clearly negative, as it holds that $\alpha < 0$ and $|\alpha| > \beta$. Hence, the distance between $g_1$ and $g_2$ is at least $2$. Assume now that the distance is exactly two. But then, the utility of guard-agent with smaller (wlog let $\util_{g_1}(\assgn)\leq\util_{g_2}(\assgn)$) utility is at most
		\begin{align*}
			\util_{g_1}(\assgn) &= \dffn(1)\cdot\sum_{\substack{i\in [2n]\\\dist(\assgn(g_1),\assgn(a_i))=2}} \frac{s_i}{\dffn(1)} + \dffn(2)\cdot\sum_{\substack{i\in [2n]\\\dist(\assgn(g_1),\assgn(a_i))=2}} \frac{s_i}{\dffn(1)} + \dffn(2)\cdot\util_{g_1}(g_2)\\
			&< k + 2\max S + \frac{\dffn(2)}{\dffn(1)}\cdot(k-2\max S) + \dffn(2)\cdot \util_{g_1}(g_2)\\
			&= k + 2\max S + \frac{\dffn(2)}{\dffn(1)}\cdot k-\frac{\dffn(2)}{\dffn(1)}\cdot2\max S -\frac{\dffn(2)}{\dffn(3)}\cdot k - \frac{\dffn(2)^2}{\dffn(1)\dffn(3)}\cdot k\\
			&= k\cdot\left(\frac{\dffn(1)\dffn(3)+\dffn(2)\dffn(3)-\dffn(1)\dffn(2)-\dffn(2)\dffn(2)}{\dffn(1)\dffn(3)}\right) + \max S\cdot\left(\frac{2\dffn(1)-2\dffn(2)}{\dffn(2)}\right)\\
			&= k\cdot\alpha + \max S\cdot\beta,
		\end{align*}
		which is, again, strictly negative, and $\assgn$ is not individually rational, which is contradiction.
	\end{claimproof}
	
	Therefore, according to the previous claim, the distance between $g_1$ and $g_2$ is exactly $3$, and they must be allocated to~$v_\ell$ and $v_r$, respectively.
	
	Now, we create a solution $I$ for $\mathcal{I}$. Specifically, we set $I = \{i\in[2n]\mid \assgn(a_i)\in V(L)\}$. From the shape of the topology, it holds that $|I|=n$. Suppose that $\sum_{i\in I} s_i \not= \sum_{i \in [2n]\setminus I} s_i$ and without loss of generality, let the left part of the comparison be smaller and $g_1$ be assigned to $v_\ell$ according to $\assgn$. Then the utility of $g_1$ was at most
	\begin{align*}
		\util_{g_1}(\assgn) &\leq k - \min S + \frac{\dffn(2)}{\dffn(1)}\cdot(k+\min S)- k - \frac{\dffn(2)}{\dffn(1)}\cdot k\\
		&= \min S\cdot\left(\frac{\dffn(2)}{\dffn(1)}-1\right) = \min S\cdot\left(\frac{\dffn(2)-\dffn(1)}{\dffn(1)}\right) < 0,
	\end{align*}
	and therefore, $\assgn$ was not individually rational, which is a contradiction. Consequently, the sums on both sides are necessarily the same and $I$ is indeed solution for $\mathcal{I}$.
\end{proof}

Next, we show that the problem remains hard even if the enmity graph consists of two arcs pointed towards the same agent; in other words, if we ignore isolated vertices, the enmity graph is an in-star with two arcs.

\begin{theorem}\label{thm:two_arcs_same}
	For any distance factor function $f$, it is \NPc to decide the \IRTDG problem, even if there are only two arcs in the enmity graph and both of them are directed towards the same agent.
\end{theorem}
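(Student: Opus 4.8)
The plan is to reduce from \textsc{Equitable Partition} (as announced just before the theorem), following the template of \Cref{thm:IRTDG:NPc:enmityGraphWithTwoArcs} but replacing the two mutually hostile guards by two guard agents $g_1,g_2$ that share a single common enemy $p$, so that the enmity graph consists exactly of the arcs $g_1\to p$ and $g_2\to p$.

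\emph{The gadget.} Given $S=(s_1,\dots,s_{2n})$ with $\sum_i s_i=2k$ (and the strengthened assumptions $\min S\ge n^2$, $|s_i-s_j|\le\min S/n^2$), I would build a topology with a ``left'' region $L$ and a ``right'' region $R$, each with $n$ vertices for the $2n$ \emph{item-agents} $a_1,\dots,a_{2n}$, a vertex $v_\ell$ adjacent to all of $L$, a vertex $v_r$ adjacent to all of $R$, and a central vertex $c$ joined to $v_\ell$ and $v_r$; the three extra vertices will host $g_1,g_2,p$. All relevant pairwise distances are bounded by a small constant, so the instance is well defined for every $f$. I would set $\util_{a_i}\equiv 0$ for all items (so the item-agents together with $p$ form one utility type and $g_1,g_2$ the other), $\util_{g_1}(g_2)=\util_{g_2}(g_1)=0$, $\util_{g_1}(a_i)=\util_{g_2}(a_i)=s_i$, and $\util_{g_1}(p)=\util_{g_2}(p)=-W$, where $W$ is tuned (as a function of $k$ and of the few relevant values of $f$) so that in the intended assignment --- $g_1\mapsto v_\ell$, $g_2\mapsto v_r$, $p\mapsto c$, items split between $L$ and $R$ according to a set $I$ --- the utility of $g_1$ equals $f(1)\bigl(\sum_{i\in I}s_i+\frac{f(d)}{f(1)}\sum_{i\notin I}s_i-W\bigr)$ with $d=\dist(v_\ell,R)$, which is nonnegative precisely when $\sum_{i\in I}s_i\ge k$. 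Since the two side-sums add up to $2k$ and each side has exactly $n$ slots, both guards are nonnegative simultaneously iff the split is balanced, i.e.\ iff $S$ is a yes-instance. The forward direction is then immediate: from a balanced $I$, the intended assignment gives every item-agent and $p$ utility $0$ and each guard utility $0$.

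\emph{The converse.} From an individually rational assignment one must recover a balanced split, and this rests on a chain of claims pinning down positions: (i) $\{g_1,g_2\}$ occupies $\{v_\ell,v_r\}$, (ii) $p$ occupies $c$, (iii) the item-agents occupy $L\cup R$ (hence $n$ per side). Once (i)--(iii) hold, each guard's IR condition reads ``the $n$ items on my side sum to at least $k$'', and the two conditions together force the split to be exactly $(k,k)$, giving the desired partition.

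\emph{Main obstacle.} I expect claims (i)--(iii) --- the rigidity of the gadget --- to be the hard part. The penalty $-W\,f(\dist(g_i,p))$ only ``bites'' when $\dist(g_i,p)$ is minimal; if an assignment may place $p$ far from one guard, that guard's IR constraint becomes vacuous and the other guard only needs \emph{some} $n$-subset of items summing to $\ge k$, which always exists (the $n$ largest items sum to $\ge k$). So the topology must be rigid enough that in every individually rational assignment both $\dist(g_1,p)$ and $\dist(g_2,p)$ are minimal --- which is exactly what forces $g_1,g_2,p$ onto $v_\ell,v_r,c$. I would enforce this by keeping the diameter small and choosing $W$ large enough that a guard placed at any vertex other than $v_\ell,v_r$, or with $p$ anywhere but $c$, cannot collect enough item-weight nearby to offset even a minimal penalty, yet small enough to keep the intended assignment individually rational, and I would lean on the near-uniformity of the $s_i$ (the $\min S/n^2$ bound, which collapses ``almost balanced'' to ``exactly balanced'') to absorb the $O(1)$ slack in the counting. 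Verifying that one choice of $W$ and one topology achieves all of this uniformly over every distance factor function $f$ is the delicate point, and this is also where I would anticipate having to enrich the bare three-vertex hub with auxiliary vertices and/or padding agents if the naive version proves too loose.
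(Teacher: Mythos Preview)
Your overall plan---Equitable Partition, $2n$ item-agents, two guards $g_1,g_2$ with a single common enemy $p$, tuned so that each guard needs ``its'' $n$ items to sum to at least $k$---is exactly the paper's approach. Where you diverge is the topology, and that divergence is precisely why the rigidity step (your claims (i)--(iii)) looks hard to you while in the paper it is almost immediate.

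The paper's topology has diameter \emph{two}: the enemy's intended vertex $B$ (your $c$) is made adjacent to \emph{all} $2n$ item-vertices, and the two guard-vertices $H_1,H_2$ (your $v_\ell,v_r$) are joined by an edge. With diameter two, the concern you flag---``if an assignment may place $p$ far from one guard, that guard's IR constraint becomes vacuous''---simply cannot arise: $\dist(g_i,p)\in\{1,2\}$ no matter where anyone sits. One then sets $W=\frac{f(1)+f(2)}{f(2)}\,k$, so that each guard needs total item-contribution at least $(f(1)+f(2))k$; a short case analysis (using that the number of agents equals the number of vertices, so the assignment is a bijection) shows this is attainable for \emph{both} guards simultaneously only in the intended configuration, and only when the split is exactly $k{+}k$. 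No near-uniformity of the $s_i$ is needed.

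In your topology, $c$ is adjacent only to $v_\ell,v_r$, so the diameter is at least $4$ and $p$ can sit at distance $3$ or $4$ from a guard; this is exactly the loose end you noticed, and patching it with extra padding vertices would push against the ``two arcs, for every $f$'' constraints. The fix is not to enrich the hub but to add edges: make $c$ adjacent to all of $L\cup R$ and add the edge $v_\ell v_r$. Then the rigidity claims fall out in a couple of lines.
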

\begin{proof}
	We will prove our result via a reduction again from \textsc{Equitable Partition}. So, given an instance $S$ of equitable partition with $2n$ integers, we construct an equivalent instance $\mathcal{J}$ of the \IRTDG problem as follows.
	There will be $2n+3$ agents $a_1, a_2, \ldots, a_{2n}, h_1, h_2, b$. 
	The enmity graph consists of two arcs directed from $h_1$ and $h_2$ towards $b$.
	The topology has $2n+3$ vertices and is depicted at Figure~\ref{fig:two_arcs_same}; for $i \in \{1,2\}$ there exist $n$ vertices between vertex $H_i$ and $B$. 
	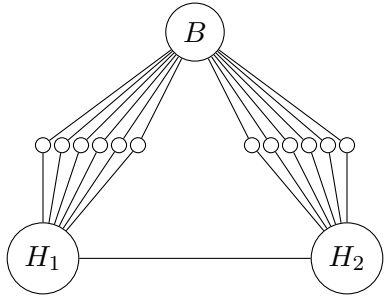
\begin{figure}
		\centering
		\begin{tikzpicture}
			\node[draw,circle] (b) at (0,0) {$B$};
			\node[draw,circle] (h1) at (-2,-3) {$H_1$};
			\node[draw,circle] (h2) at (2,-3) {$H_2$};
			
			\foreach[count=\i] \x in {-2,-1.75,-1.5,...,-0.75} {
				\node[draw,circle,inner sep=2pt] (v\i) at (\x,-1.5) {};
				\draw (v\i) edge (b) edge (h1);
			}
			
			\foreach[count=\i] \x in {2,1.75,1.5,...,0.75} {
				\node[draw,circle,inner sep=2pt] (v\i) at (\x,-1.5) {};
				\draw (v\i) edge (b) edge (h2);
			}
			
			\draw (h1) -- (h2);
		\end{tikzpicture}
		\caption{The topology used in Theorem~\ref{thm:two_arcs_same}.}
		\label{fig:two_arcs_same}
	\end{figure}
	Observe that since the diameter of the topology is 2, we need to consider only values $f(1)$ and $f(2)$ of the distance factor function $f$.
	There are two types of utility functions. Agents $h_1$ and $h_2$ have the same type of utility and for every $i \in \{1,2\}$ we have: $u_{h_i}(a_j)=s_j$ for every $j \in [2n]$; $u_{h_i}(b)=\frac{f(1)+f(2)}{f(2)}\cdot k$; and 0 otherwise. The rest of the agents get utility 0 from every other agent.
	
	The high-level idea is that if an individually rational solution exists, then agents $h_1$ and $h_2$ occupy vertices $H_1$ and $H_2$ respectively, agent $b$ occupies vertex $B$, and the remaining agents are split to ``left'' and ``right'' such that they correspond to an equitable partition.
	
	Assume that we have a solution $(S_1, S_2)$ for the given instance of \textsc{Equitable Partition}, i.e., for every $i \in \{1,2\}$ it holds that $\sum_{j \in S_i}s_i=k$. Then we create the following allocation for the agents on the topology. Agent $b$ is located on $B$, and for $i \in \{1,2\}$ $h_1$ agent $h_i$ is located on $H_i$ and we place agents from $S_i$ on the vertices between $B$ and $H_i$. Individual rationality depends only on the utility of the agents $h_1$ and $h_2$; everyone else gets utility 0 independently from how the agents are placed. Indeed, observe that the utility of agent $h_i$ is $f(1)\cdot \sum_{j \in S_i}s_i + f(2) \cdot \sum_{j \in S_i}s_i - f(2) \cdot \frac{f(1)+f(2)}{f(2)}\cdot k = f(1)\cdot k + f(2) \cdot k - f(2) \cdot \frac{f(1)+f(2)}{f(2)}\cdot k = 0$. Hence the allocation is individually rational.
	
	Conversely, assume that we have an individually rational solution for the instance we have created. 
	We argue that agent $b$ must be located on $B$, and for $i \in \{1,2\}$ agent $h_i$ must be located on $H_i$. 
	Observe that in any solution the distance between agent $b$ and agents $h_1$ and $h_2$ must be exactly two (since it cannot be more than two), otherwise the utility of at least one of $h_1$ and $h_2$ is strictly negative.
	Hence, each one of $h_1$ and $h_2$ gets utility $-(f(1)+f(2))\cdot k$ due to agent $b$. 
	Thus, in order both to get non-negative utility, each one of them should get utility at least $(f(1)+f(2))\cdot k$ from the other agents. It is not hard to see that this is possible only when $b$ is located on $B$, and agent $h_i$ must be located on $H_i$; in any other case, one of $h_1$ and $h_2$ will gain strictly less utility. Additionally, due to the topology structure, observe that in order both agents gain $(f(1)+f(2))\cdot k$ from the remaining agents, it means that the utility from the $n$ agents that are in distance one for each one of them is $f(1)\cdot k$ and the utility they gain from the agents that are in distance two (excluding agent $b$) is $f(2)\cdot k$. 
	Denote $S_1$ the set of $n$ vertices that lie between $H_1$ and $B$.
	Hence, it holds that $\sum_{j \in S_1}s_j=k$, which is a solution for the original instance of \textsc{Equitable Partition}.
\end{proof}

Our last result of the section shows that even restricting the topology to a path does not surprisingly suffice for tractability.

\begin{theorem}\label{thm:NPc_path}
	It is \NPc to decide the \IRTDG problem, even if there are only three arcs in the enmity graph, all of them are directed towards the same agent, and the topology is a path.
\end{theorem}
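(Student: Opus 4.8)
The plan is to reduce from \textsc{Equitable Partition} again, reusing the normalization assumptions stated before Theorem~\ref{thm:IRTDG:NPc:enmityGraphWithTwoArcs} (namely $\min S \geq n^2$ and $|s_i - s_j| \leq \min S / n^2$, so that any $n-1$ elements sum to strictly less than $k$). The topology will be a path $P$; the central agent $b$ (the target of all three enmity arcs) will have to sit at a carefully chosen vertex, and three ``guard'' agents $h_1, h_2, h_3$ (the sources of the arcs) will have to sit at three specified vertices of the path, with the element-agents $a_1, \dots, a_{2n}$ distributed in between. On a path, distances are large and the distance factor function $\dffn$ is evaluated at many values, so the main design work is to make the arithmetic force a clean split.

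First I would fix the structure of the path. The natural idea is to place $b$ somewhere in the interior and to create three ``pockets'' of $n$ consecutive slots each, one between $b$ and each guard $h_j$, so that the guards are at equal distance from $b$ and each guard's utility from the element-agents in ``its'' pocket is weighted by the same multiset of distance factors. Concretely one would put $b$ at a vertex $B$, then a pocket of $n$ empty slots, then $h_1$; symmetrically a pocket then $h_2$ on the other side; and the third guard $h_3$ further out past one of them (or on a short pendant-free extension of the path), with its own pocket of $n$ slots between it and $b$. The utilities are set analogously to Theorem~\ref{thm:two_arcs_same}: each guard $h_j$ values $a_i$ at (a scaled version of) $s_i$, values $b$ at a fixed large negative-compensating constant chosen so that the unique way for all three guards to get non-negative utility simultaneously is the ``canonical'' assignment, and all other utilities are $0$ (so individual rationality is governed entirely by the three guards). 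Because the path is rigid, the shortest path from $b$ to $h_j$ must pass through the $j$-th pocket, which is exactly what pins down the distances and forces the $2n$ element-agents to be partitioned $n$/$n$/... — wait, there are only $2n$ element-agents for three pockets of size $n$, so one pocket is a ``dummy'' filled with a distinguished set of filler agents, and the real equitable partition is read off from the other two pockets, with the filler pocket forcing $|S_1| = |S_2| = n$ via a counting/rigidity argument as in the earlier proofs.

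The correctness proof then splits into the two usual directions. For the forward direction, given an equitable partition $(S_1, S_2)$ one writes down the canonical assignment and checks each guard's utility telescopes to $0$ exactly — a routine computation with $\dffn(1), \dots, \dffn(n+1)$ (or whatever distances arise), mirroring the displayed computation in Theorem~\ref{thm:two_arcs_same}. For the backward direction one argues, via a sequence of claims, that (i) $b$ must be at distance exactly $2n+1$ (or whatever the designed value is) from each $h_j$, using that any larger distance makes some guard's contribution from $b$ too negative and any ``misaligned'' placement strictly hurts one guard, by strict monotonicity of $\dffn$; (ii) this rigidity forces $b = B$ and $h_j$ at its designated endpoint, with exactly $n$ element-agents in each pocket; and (iii) the requirement that both real guards reach non-negative (hence exactly $0$) utility forces the weighted sums of element-agents in the two real pockets to each equal $k$, which, since each pocket holds $n$ of them and $\min S$ is large, forces $\sum_{i \in S_1} s_i = \sum_{i \in S_2} s_i = k$.

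The main obstacle, and the reason the theorem is nontrivial (and needs three arcs rather than two), is step~(ii)/(iii) on a \emph{path}: unlike the constant-diameter gadgets of the previous theorems, here the distance factor function is probed at a whole range of values, so a naive two-arc, two-pocket construction leaves the adversary freedom to shift agents toward or away from a single guard and rebalance. Using three guards and three pockets arranged symmetrically around $b$ removes this slack — any deviation that helps one guard provably hurts another by strict monotonicity — and the delicate part of the write-up will be the inequality chain showing that no ``skewed'' placement of $b$ or the guards can keep all three utilities non-negative, which must be argued for an \emph{arbitrary} strictly decreasing $\dffn$ and therefore cannot rely on any particular numeric values of $\dffn$, only on $\dffn(d) > \dffn(d+1) > 0$.
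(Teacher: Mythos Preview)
Your plan has a geometric gap that cannot be repaired without changing the approach. On a path, any interior vertex $B$ has exactly two sides, so you cannot have three disjoint ``pockets'' of $n$ slots each, one between $b$ and each guard, arranged ``symmetrically around $b$''. You notice this yourself (``the third guard $h_3$ further out past one of them''), but then the pocket between $b$ and $h_3$ necessarily contains one of the other guards \emph{and} that guard's entire pocket, so the distances from $h_3$ to the element-agents are no longer controlled by a separate block of slots, and your rigidity argument in step~(ii) collapses. The symmetric three-pocket picture is really a degree-$3$ star picture (as in Theorem~\ref{thm:two_arcs_same}), not a path picture.

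There is a second, related gap: you insist the argument must go through ``for an \emph{arbitrary} strictly decreasing $\dffn$'', but the theorem statement does not claim this (compare it with Theorems~\ref{thm:unrestricted_ennmity}, \ref{thm:IRTDG:NPc:enmityGraphWithTwoArcs}, \ref{thm:two_arcs_same}, which explicitly say ``for every distance factor function''). The paper's proof exploits this freedom and \emph{designs} a specific $\dffn$ with huge multiplicative gaps at a few chosen distances (roughly $f(d)\approx 2^{3\ell}$ for $d\le n$, then $2^{2\ell}$ at $d=n+1$, then $\approx 2^{\ell}$ for $n+2\le d\le 2n+2$, then $1$ at $d=2n+3$). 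The three grumpy agents are then placed \emph{asymmetrically}---at distances $1$, $n+2$, and $2n+3$ from the trouble-maker $t$ sitting at an endpoint---and their negative utilities toward $t$ are calibrated against these gaps so that, in cascade, $g_3$ forces $t$ to the far endpoint, $g_2$ forces all $2n$ element-agents into the middle block, and $g_1$ (next to $t$) forces an equitable split of that block. The ``three arcs'' are not buying symmetry; they are buying three successive rigidity constraints along the path, each enforced by a handcrafted jump in $\dffn$. Your symmetric plan with generic $\dffn$ does not supply any mechanism of this kind.
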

\begin{proof}
	Given an instance $S$ of \textsc{Equitable Partition} with $2n$ integers, we construct an equivalent instance $\mathcal{J}$ of the \IRTDG problem as follows. Recall that we assume that 
	for any $I\subseteq [2n]$ such that $|I|\le n-1$, we have $\sum_{i\in I} s_i < k$.  
	
	The topology of $\mathcal{J}$ is a path $P = (v_1, v_2, \ldots, v_{2n+4})$ on $2n+4$ vertices. The set of agents $\agents$ also contains $2n+4$ vertices, split into:
	\begin{itemize}
		\item one trouble-maker $t$;
		\item three grumpy agents $g_1$, $g_2$, $g_3$;
		\item $2n$ element-agents $a_1, a_2, \ldots, a_{2n}$.
	\end{itemize}
	
	The idea is that only negative utilities are set from the three grumpy agents towards the trouble-maker. To balance it, the grumpy agents will have positive utility towards the element agents that depend on which element the given agent represent. The most animosity is from $g_3$ towards $t$ and the function $f$ is carefully crafted, so that $g_3$ and $t$ are at the opposite sides of $P$ and $g_3$ needs $n$ element-agents, at distance at most $n$ each, representing elements with total sum at least $k$ to balance the negative contribution of $t$. The second most animosity is from $g_2$ towards $t$, crafted that $g_2$ needs to be at distance at least $n+2$ from $t$ and when it is at distance exactly $n+2$, then $g_2$ needs all the element-agents at distance at most $n$ to balance its animosity towards $t$. Finally, this will fix $g_1$ exactly next to $t$ and to balance its animosity towards $t$, we need $n$ element-agents, at distance from $g_1$ at most $n$ each, representing elements with total sum at least $k$ to balance the negative contribution of $t$.
	
	Now let $\ell$ be an integer such that $2k^3\ge 2^\ell > k^3$ and let us set $f(d)$ for $d\in [2n+4]$ as follows: 
	
	\[
	f(d) = 
	\begin{cases*}
		2^{3\ell}-d & \text{ if  $1 \le d \le n$,} \\
		2^{2\ell} & \text{ if  $d = n+1$,} \\
		2^{\ell}-d & \text{ if  $n+2 \le d \le 2n+2$,} \\
		1 & \text{ if $d= 2n+3$. }
	\end{cases*}
	\]
	
	Now for $i\in \{1,2,3\}$ and $j\in [2n]$, we let the utility $u_{g_i}(a_j) =s_j$. Recall that we assume that $s_j \ge n^2$. Observe that utility that the agent $g_i$ gets from $a_j$ at distance at most $n$ is between $(2^{3\ell}-n)s_j$  and $2^{3\ell}\cdot s_j$ and the utility obtained at distance at least $n+1$ is at most $2^{2\ell} \cdot s_j$. 
	That is if the elements for some $i\in \{1,2,3\}$, we have exactly the element-agents $\{a_i\mid i\in I\}$ for some set $I\subseteq [2n]$ at distance at most $n$ and the remaining element-agents at a larger distance, the contribution of the element-agents towards utility of $g_i$ is $2^{3\ell}\cdot \sum_{j\in I}s_j - \sum_{j\in I}(d_j\cdot s_j) + r_I$, where $d_j$ is distance between $g_i$ and $s_j$ and $r_I$ is the total utility contribution of element-agents at distance at least $n+1$, so $r_I\le 2^{2\ell}\cdot 2k \le 8k^7$. 
	Note that we are assuming $k\ge n^3$ and $n\ge 10$. So $2^{2\ell}\cdot 2k$ is much smaller than $2^{3\ell}\ge k^9$. 
	Similarly, $\sum_{j\in I}(d_j\cdot s_j)\le n\cdot 2k$ is much smaller than $2^{3\ell}$ and so the element-agents contribution to the utility of $g_i$ if 
	exactly the element-agents with indexes in $I$ are at distance at most $n$ from $g_i$ is $2^{3\ell} \cdot (\sum_{j\in I}s_j) + r'_I$, where $r'_I \ge - n\cdot 2k$ and $r'_I < 2^{3\ell}$.
	
	With this in mind, we can set the utilities of $g_1$, $g_2$, and $g_3$ towards $t$ as follows: 
	
	\begin{align*}
		u_{g_1}(t) &= - \frac{2^{3\ell}\cdot k - n\cdot 2k}{f(1)}, \\
		u_{g_2}(t) &= - \frac{2^{3\ell}\cdot 2k - n\cdot 2k}{f(n+1)}, \text{and } \\
		u_{g_3}(t) &= - \frac{2^{3\ell}\cdot k - n\cdot 2k}{f(2n+3)}.
	\end{align*}
	
	We can now verify that we indeed get the intended positioning of the agents as described above. 
	Given an assignment $\lambda$, if $\dist(\lambda(t), \lambda(g_3))\le 2n+2$, then $\dist(\lambda(t), \lambda(g_3))\cdot u_{g_3}(t) \le - (2^\ell-d) \cdot (2^{3\ell}\cdot k - n\cdot 2k) < - 2^{3\ell}\cdot 2k $ and there is no way for the element-agents to balance this. So $\dist(\lambda(t), \lambda(g_3))= 2n+3$ and $\dist(\lambda(t), \lambda(g_3))\cdot u_{g_3}(t) = 2^{3\ell}\cdot k - n\cdot 2k$. From the discussion above, we need that the subset of elements-agents that are allocated at distance at most $n$ from $g_3$ have to sum to at least $k$ and if that is the case, they can be distributed arbitrarily on vertices at distance at most $n$. Since we assume that no subset of less than $n$ elements can sum-up to $k$, it follows that at the vertices at distance $n+3, n+4, \ldots, 2n+2$ from $t$ are element-agents and the sum of elements their represent is at least $k$.
	
	Given this $g_2$ is at distance at most $n+2$ from $t$. Repeating the same argument and observing that $\frac{f(n+1)}{f(n+2)} 
	> 2^\ell$, we get that $g_2$ has to be at distance exactly $n+2$ from $t$ and that the sum of the elements represented by the element-agents at distance at most $n$ from $g_2$ is at least $2k$. That is all element-agents have to be at distance at most $n$ from $g_2$. 
	
	This means that the only possibility for a position of $g_1$ in an individually rational allocation is next to $t$. In this case $\dist(\lambda(t), \lambda(g_1))\cdot u_{g_1}(t) = -(2^{3\ell}\cdot k - n\cdot 2k)$ and we get that the subset of elements-agents that are allocated at distance at most $n$ from $g_3$ have to sum to at least $k$. Since, the sum of all elements is exactly $2k$, it follows that the sum of the elements that are associated with element agents at distance between $2$ and $n+1$ from $t$ is exactly $k$ and the sum of the elements that are associated with element-agents at distance between $n+3$ and $2n+2$ from $t$ is exactly $k$ as well. Therefore, if $\mathcal{J}$ admits individually rational solution, then $S$ admits an equitable partition. On the other hand, it is straightforward to verify that given $I\subseteq [2n]$ such that $|I|=n$ and $\sum_{i\in I}s_i=k$, we can construct an individually rational allocation $\lambda$ by letting $\lambda(t)=v_1$,
	$\lambda(g_1)=v_2$, $\lambda(g_2)=v_{n+3}$, $\lambda(g_3)=v_{2n+4}$, and placing element-agents with indexes in $I$ to vertices $v_3, \ldots, v_{n+2}$ and the remaining element-agents at vertices $v_{n+4}, \ldots, v_{n+3}$. Hence the two instances are equivalent. 
\end{proof}

In the last result of this section, we show that our problem remains intractable even if the topology is disconnected and each component is of a constant size. That is, the hardness of the problem is not caused by the fact that an agent's utility in an assignment depends on all other agents that participate in the game.

\begin{theorem}\label{thm:const_components}
	For any distance factor function $f$, it is \NPc to decide the \IRTDG problem, even if the topology is disconnected, each connected component is of size $5$, and the utilities are symmetric.
\end{theorem}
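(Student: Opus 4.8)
The plan is to reduce from \textsc{Partition into $K_5$'s}: given a graph $H=(V,E)$ with $|V|=5m$, decide whether $V$ can be split into $m$ blocks, each inducing a clique on five vertices. This problem is \NP-complete; it is the natural generalisation of \textsc{Partition into Triangles}~\cite{GareyJ1979}, and more generally partitioning a graph into vertex-disjoint copies of a fixed connected graph on at least three vertices is \NP-complete. Instances in which $|V|$ is not a multiple of $5$ are trivial no-instances, so we may assume $|V|=5m$.

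Given such an $H$, I would build the following \IRTDGshort instance. The topology $G$ is the disjoint union of $m$ copies of $K_5$, so $|V(G)|=5m$; we introduce one agent $a_v$ per vertex $v\in V$, hence the number of agents equals $|V(G)|$ and every assignment $\assgn$ is a bijection whose component-preimages partition the agents into $m$ five-element blocks. The (symmetric) utilities are $\util_{a_u}(a_v)=\util_{a_v}(a_u)=1$ if $\{u,v\}\in E$, and $\util_{a_u}(a_v)=\util_{a_v}(a_u)=-5$ if $u\neq v$ and $\{u,v\}\notin E$. Any two agents in the same component are at distance $1$, and agents in different components are at distance $\infty$ (contributing $\dffn(\infty)=0$); therefore $\util_{a_v}(\assgn)=\dffn(1)\cdot\bigl(j_v-5(4-j_v)\bigr)$, where $j_v$ is the number of co-located agents of $a_v$ whose vertices are $H$-neighbours of $v$, and $\dffn(1)>0$. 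Since $j_v-5(4-j_v)=6j_v-20$ equals $4$ when $j_v=4$ and is at most $-2$ otherwise, $\assgn$ is individually rational exactly when every component carries five agents whose vertices form a clique of $H$ --- that is, exactly when $H$ admits a partition into $m$ copies of $K_5$. The reduction is clearly polynomial, membership in \NP is \Cref{obs:IRTDG:NP}, and because only the positive value $\dffn(1)$ ever enters the computation, the hardness holds for every distance factor function $\dffn$, which establishes the claimed \NPcness.

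Conceptually, the only point that needs care --- and the main obstacle --- is that with constant-size disconnected components an agent's utility sees nothing but its (at most four) co-located agents, so all the combinatorial difficulty must be packed into deciding which five-element blocks are ``feasible''; the $+1/-5$ gadget above does precisely this, collapsing individual rationality to a clique-partition condition on $H$. Everything else is routine bookkeeping: one merely has to quote (or re-derive) the \NPhness of partitioning a graph into fixed-size cliques, and the component size $5$ is simply the value requested in the statement --- the identical argument with $K_t$-components and \textsc{Partition into $K_t$'s} proves the analogue for every fixed $t\ge 3$.
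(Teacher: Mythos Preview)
Your argument is correct. The reduction from \textsc{Partition into $K_5$'s} (whose \NPhness is indeed a standard consequence of the Kirkpatrick--Hell theorem on $H$-factors) works exactly as you describe: with a bijective assignment onto $m$ disjoint $K_5$'s, the utility of $a_v$ is $\dffn(1)\cdot(6j_v-20)$, which is non-negative only for $j_v=4$, so individual rationality collapses to the clique-partition condition.

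The paper takes a different route: it reduces from \textsc{$3$-Partition}. Its topology is again a disjoint union of $K_5$'s, but the agents are $2n$ guard-agents (pairwise utility $-k$) and $3n$ element-agents (mutual utility $0$; utility $s_j$ between any guard and element-agent $a_j$). Individual rationality forces exactly two guards per component, and then each guard's utility is $\dffn(1)(-k+\sum s_j)$ over the three element-agents sharing its clique, so non-negativity is equivalent to the numerical $3$-partition condition. Compared with this, your reduction is more elementary and even a touch stronger --- you use only the two non-zero symmetric values $\{+1,-5\}$, whereas the paper's construction needs utilities that encode the item sizes $s_j$. On the other hand, the paper's guard-plus-items pattern mirrors the gadgetry it uses in its other hardness proofs, and it relies only on the textbook \textsc{$3$-Partition} rather than on the (equally classical but less frequently cited) $K_5$-partition hardness.
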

\begin{proof}
	We prove the result via a reduction from the \mbox{\textsc{3-Partition}} problem. Here, we are given a list $\mathcal{S}=(s_1,\ldots,s_{3n})$ of $3n$ positive integers such that $\sum_{i\in[3n]} s_i = n\cdot k$, and the goal is to decide whether there exists a partition of $\mathcal{S}$ to triplets $S_1,\ldots,S_n$ such that for each $S_i$, $i\in[n]$, $\sum_{s \in S_i} s = k$. \textsc{3-Partition} is known to be \NPc even if all numbers are encoded in unary and each $s_i$, $i\in[3n]$, is between $k/4$ and $k/2$~\cite{GareyJ1975,GareyJ1979}.
	
	Given an instance $\mathcal{S}$ of the \textsc{3-Partition} problem, we construct an equivalent instance $\mathcal{J}$ of the \IRTDG problem as follows. First, the topology~$G$ consists of $n$ disjoint copies of $K_5$, that is, $G$ is a disjoint union of cliques on $5$ vertices. Next, the set of agents consists of $2n$ \emph{guard-agents} $g_1,\ldots,g_{2n}$ and $3n$ \emph{element-agents} $a_1,\ldots,a_{3n}$, each corresponding to a single element of $S$. The utilities are as follows. The utility function between element-agents is constant zero, and for a guard-agent $g$, we set $\util_{a_i}(g) = s_i$ for every element-agent $a_i$, $i\in[3n]$. Let $g_i$, $i\in[2n]$, be a guard-agent. For every other guard-agent $g_j \not= g_i$, we have $\util_{g_i}(g_j) = -k$, and for a element-agents $a_j$, $j\in[3n]$, we have $\util_{g_i}(a_j) = s_j$. The distance factor function may be arbitrary, as the distances in the topology are either $1$ or $\infty$. Moreover, it is easy to see that the utilities are symmetric.
	
	The high-level idea behind the construction is that every clique contains exactly two guard-agents. Their utility, ignoring other agents, from being in the same connected components is $-k$, and the only way how to make the utility non-negative is to split the element-agents into triplets such that the increase in the utility of each guard-agent is exactly $k$. Observe that it cannot be more by the definition of the utilities.
	
	For correctness, let $\mathcal{S}$ be a \emph{yes}-instance and $S_1,\ldots,S_n$ be a desired partition of $S$. Let $C_1,\ldots,C_n$ be an arbitrary but fixed order of the connected components of $G$. We construct an individually rational assignment $\lambda$ as follows. For each $C_i$, $i\in[n]$, we assign arbitrarily to its vertices the guard-agents $g_{2i-1}$, $g_{2i}$, and all element-agents $a_j$ such that $s_j\in S_i$. For element-agents, all assignments are individually rational, and hence, we only need to check that $\lambda$ is individually rational for guard-agents. Let $C_i$, $i\in[n]$, be an arbitrary connected component of $G$, $g$ and $g'$ be two guard-agents such that $\{g,g'\} \subset V(C_i)$, and $a_{j_1}$, $a_{j_2}$, and $a_{j_3}$ be element-agents assigned to $C_i$. The utility of $g$ in assignment $\lambda$ is
	\begin{align*}
		\util_g(\lambda) &= \util_g(g') + \util_g(a_{j_1}) + \util_g(a_{j_2}) + \util_g(a_{j_3})\\
		&= \dffn(1)\cdot(-k) + \dffn(1)\cdot s_{j_1} + \dffn(1)\cdot s_{j_2} + \dffn(1)\cdot s_{j_3}\\
		&= \dffn(1)( -k + s_{j_1} + s_{j_2} + s_{j_3}) = 0.
	\end{align*}
	The equation holds because $\mathcal{S}$ is a \emph{yes}-instance, elements in each $S_i$ sum up to exactly $k$, and $\dffn(1)$ is always non-negative. The utility of guard-agent $g'$ is symmetric, and therefore, $\lambda$ is individually rational assignment witnessing that $\mathcal{J}$ is also a \emph{yes}-instance.
	
	In the opposite direction, let $\mathcal{J}$ be a \emph{yes}-instance and $\lambda$ be an individually rational assignment. First, we show that $\lambda$ assigns exactly two guard-agents to each connected component of $G$.
	
	\begin{claim}
		For each connected component $C$ of $G$, it holds that $|\{\lambda(g_1),\ldots,\lambda(g_{2n})\}\cap C| = 2$.
	\end{claim}
	\begin{claimproof}
		For the sake of contradiction, suppose that there exists a component $C$ such that $\lambda$ assigns at least three guard-agents $g_{i_1}$, $g_{i_2}$, and $g_{i_3}$ to its vertices. Then, the utility of $g_{i_1}$ is
		\begin{align*}
			\util_{g_{i_1}}(\lambda) &\leq \util_{g_{i_1}}(g_{i_2}) + \util_{g_{i_1}}(g_{i_3}) + \dffn(1)\cdot\frac{k}{2} + \dffn(1)\cdot\frac{k}{2} \\
			&= \dffn(1)\cdot (-k) + \dffn(1)\cdot (-k) + 2\cdot\dffn(1)\cdot\frac{k}{2}\\
			&= \dffn(1)( -2k + k ) = \dffn(1)(-k).
		\end{align*}
		Again, as $\dffn(1)$ is always positive, we obtain the $\lambda$ is not individually rational. Hence, each individually rational assigns at most two guard-agents to each connected component. By the Pigeonhole principle, we obtain that the number of assigned guard-agents is exactly two.
	\end{claimproof}
	
	Now, we create a solution partition $S_1,\ldots,S_n$ for $\mathcal{S}$ such that for each $S_i$, we set $S_i = \{s_j\mid \lambda(a_j) \in V(C_i)\}$. For the sake of contradiction, suppose that there exists a set $S_i$, $i\in[n]$, such that $\sum_{s\in S_i} s \not= k$. This means that for guard agent $g$ such that $\lambda(g) \in V(C_i)$, the utility was $\dffn(1)\cdot(-k) + (k - \varepsilon) = - \varepsilon$, where $\varepsilon > 0$. However, this would mean that $\lambda$ was not individually rational, which is a contradiction. Therefore, such a situation cannot occur. Moreover, it is easy to see that $S_1,\ldots,S_n$ is indeed a partition of $\mathcal{S}$, and the theorem follows.
\end{proof}

\section{Parameter-Many Agents}\label{sec:agents_parameter}

In the previous section, we have established strong intractability results for the problem when the number of agents is part of the input. 
For this reason, in this section, we follow the parameterized complexity paradigm and we consider $\numagents$ to be a parameter of the problem; \Cref{fig:results:agentsParam} provides a mind-map of our results. Note that parameterization by the number of agents has been successfully used to give fixed-parameter algorithms for various hard problems in computational social choice; see, e.g.,~\cite{BredereckFKKN2020,DeligkasEGHO2021,GanianOR2023}.

\begin{figure}[tb!]
	\centering
	\begin{tikzpicture}[align=center,node distance=1.8cm]
		\tikzstyle{decision} = [draw,align=center,text width=2.5cm];
		\tikzstyle{result} = [draw,rectangle,rounded corners,align=center,text width=2.5cm,node distance=4cm];
		\tikzstyle{Wh} = [fill=orange!30];
		\tikzstyle{NPh} = [fill=red!30];
		\tikzstyle{FPT} = [fill=green!30];
		
		\node[decision] (d1) at (0,0) {Utilities};
		\node[decision,below of=d1] (d2) {Topology};
		\node[decision,below of=d2] (d3) {Enmity Graph};
		\node[decision,below of=d3] (d4) {Distance Factor Fun.};
		
		\node[right of=d1,node distance=4cm] (dummy) {};
		\node[result,Wh,right of=d1,node distance=7cm] (p1) {\small\Wc\\\small[Thm.~\ref{thm:IRTDG:Wh:agents}]};
		\node[result,Wh,below of=dummy,node distance=0.8cm] (p2) {\small \Wc\\\small[Thm.~\ref{thm:Wh:path}]};
		\node[result,FPT,below of=p2,node distance=1.2cm] (p3) {\small\FPT\\\small[Cor.~\ref{thm:FPT_sd}]};
		
		\node[decision,right of=d3,node distance=4cm] (d5) {Topology};
		\node[result,FPT,right of=d5,node distance=4cm,yshift=0.75cm] (p6) {\small\FPT\\\small[Thm.~\ref{thm:FPT_path}]};
		\node[result,Wh,below of=p6,node distance=1.2cm] (p7) {\small\Wc\\\small[Thm.~\ref{thm:Wc:two_types}]};
		\draw[->] (d5) -- (p6) node [above,midway,sloped] {\it\small path};
		\draw[->] (d5) -- (p7) node [below,midway,sloped] {\it\small any};
		
		\node[result,FPT,right of=d4,node distance=7cm] (p4) {\small\FPT\\\small[Thm.~\ref{thm:FPT_cw}]};
		
		\draw[->] (d1) -- (p1) node [above,midway] {\it\small symmetric};
		\draw[->] (d1) -- (d2) node [left,midway] {\it\small any};
		\draw[->] (d2) -- (p2) node [above,midway,sloped] {\it\small path};
		\draw[->] (d2) -- (p3) node [below,midway,sloped] {\it\small \sd};
		\draw[->] (d2) -- (d3) node [left,midway] {\it\small any};
		\draw[->] (d3) -- (d5) node [above,midway,sloped] {\it\small in-star};
		\draw[->] (d3) -- (d4) node [left,midway] {\it\small any};
		\draw[->] (d4) -- (p4) node [above,midway] {\it\small bounded + \tww};
	\end{tikzpicture}
	\caption{A simplified overview of our results for the setting with parameter-many agents. All \Wc combinations can be solved by an \XP algorithm, which is asymptotically optimal under ETH (see \Cref{thm:XP_agents}). We use \tww{} to highlight that the result additionally requires the topology of bounded twin-width, and similarly \sd{} represents topologies with bounded shrub-depth.}
	\label{fig:results:agentsParam}
\end{figure}
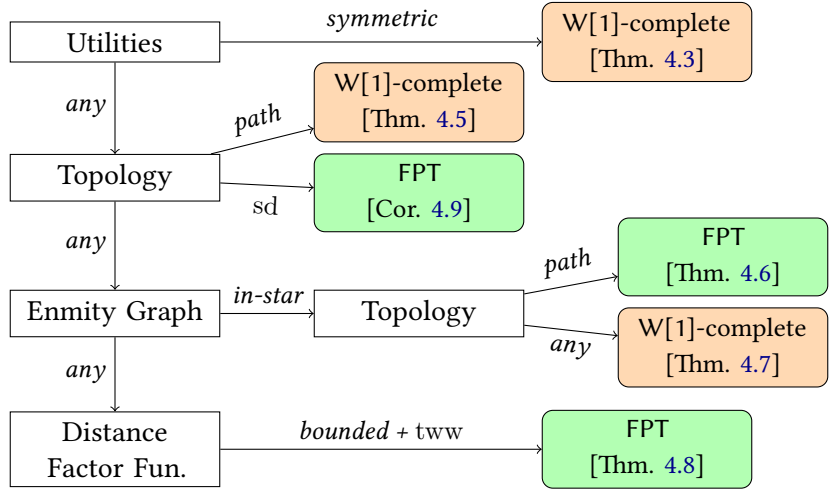

Our first result is a brute-force algorithm that finds an individually rational assignment (if one exists) in \XP time. In other words, the \IRTDGshort problem is solvable in polynomial time if the number of agents is a fixed constant.

\begin{theorem}\label{thm:XP_agents}
	There is an algorithm for the \IRTDG problem running in $|V(G)|^\Oh{\numagents}$ time.
\end{theorem}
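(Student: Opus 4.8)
The plan is to use exhaustive search over all assignments; this is the most naive strategy possible, and it already achieves the stated bound. First, I would preprocess the topology by computing $\dist_G(u,v)$ for every pair $u,v\in V(G)$ (e.g.\ by running a BFS from each vertex), which takes polynomial time. A useful byproduct is that the only finite distances that can ever occur are $1,2,\dots,|V(G)|-1$, so the only values of $\dffn$ we ever need are those, together with $\dffn(\infty)=0$; in particular the relevant arguments of $\dffn$ are polynomially bounded, matching the assumption (implicit already in \Cref{obs:IRTDG:NP}) that $\dffn$ can be evaluated in polynomial time.

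Next, I would enumerate every injective mapping $\assgn\colon\agents\to V(G)$. The number of such mappings is $\prod_{t=0}^{\numagents-1}\bigl(|V(G)|-t\bigr)\le |V(G)|^{\numagents}$, and they can be generated one at a time within this time bound. For each candidate $\assgn$, I would test whether it is individually rational: by \Cref{obs:IRTDG:NP} this runs in polynomial time, since for each agent $i$ we compute $\util_i(\assgn)=\sum_{j\in\agents\setminus\{i\}}\util_i(j)\cdot\dffn(\dist_G(\assgn(i),\assgn(j)))$ using the precomputed distances and check $\util_i(\assgn)\ge 0$. The algorithm outputs \emph{yes} (together with the witnessing assignment) as soon as some $\assgn$ passes all the agent checks, and \emph{no} if no enumerated assignment does.

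Correctness is immediate: individual rationality is a property of a single assignment, so an individually rational assignment exists if and only if one of the enumerated assignments is individually rational. For the running time, both the number of iterations and the polynomial work done per iteration are bounded in terms of $|V(G)|$, and multiplying the per-assignment polynomial factor into $|V(G)|^{\numagents}$ only increases the constant in the exponent, giving total time $|V(G)|^{\Oh{\numagents}}$. There is essentially no obstacle here; the only mild point to be careful about is verifying that $\prod_{t=0}^{\numagents-1}\bigl(|V(G)|-t\bigr)\le |V(G)|^{\numagents}$ and that all distance values arising as arguments to $\dffn$ are polynomially bounded, both of which are routine.
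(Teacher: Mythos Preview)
Your proposal is correct and follows essentially the same brute-force approach as the paper: enumerate all possible assignments and verify individual rationality for each in polynomial time. You add slightly more detail (explicit distance preprocessing and the bound $\prod_{t=0}^{\numagents-1}(|V(G)|-t)\le |V(G)|^{\numagents}$), but the idea is identical.
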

\begin{proof}
	The algorithm is a simple brute-force. We exhaustively try all assignments of vertices of the topology to agents. Then, in polynomial time, we verify whether the checked possibility assigns to each agent a different vertex and whether the assignment is individually rational. If this is the case, we return \emph{yes} as the result. Otherwise, if no possibility leads to an individually rational assignment, we return \emph{no}. The algorithm is trivially correct as it checks all possible assignments. Additionally, there are $V(G)^\Oh{\numagents}$ possible agents-vertices assignments, and for each assignment, the verification of uniqueness of the vertices and of the individual rationality can be performed in polynomial-time. Therefore, the overall running time is $|V(G)|^\Oh{\numagents}$.
\end{proof}

\begin{lemma}\label{lem:IRTDG:W:agents}
	For every distance factor function $\dffn$, the \IRTDG problem parameterized by the number of agents $|\agents|$ is in \W[1].
\end{lemma}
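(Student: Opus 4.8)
The plan is to prove membership in \W[1] through the standard bounded-nondeterminism characterization of the class: a parameterized problem lies in \W[1] whenever it is decided by a procedure that first performs a \emph{deterministic} computation of length $g(k)\cdot n^{\Oh{1}}$, then makes a single \emph{nondeterministic} guess consisting of $g(k)$ numbers, each bounded by the size of the input, and finally runs a \emph{deterministic} verification whose number of steps is bounded by a function of $k$ alone; here $k=\numagents$ and $n=|V(G)|$, and $g$ is some computable function. (This is one of the standard machine characterizations of \W[1]; see the monographs on parameterized complexity \cite{DowneyF2013,CyganFKLMPPS2015}.) I will instantiate this template for \IRTDGshort.

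In the deterministic preprocessing phase I would run a breadth-first search from every vertex of $G$ to tabulate all pairwise distances $\dist_G(u,v)$, and then, for every ordered pair of agents $i,j\in\agents$ and every ordered pair of vertices $u,v\in V(G)$, precompute and store the value $c_{i,j}(u,v)=\util_i(j)\cdot\dffn(\dist_G(u,v))$. This runs in polynomial time and produces tables of polynomial size. The nondeterministic guess is an assignment $\assgn\colon\agents\to V(G)$, i.e.\ the $k$ numbers from $\{1,\dots,n\}$ that name the images $\assgn(1),\dots,\assgn(k)$. The bounded verification then performs only $\Oh{k^2}$ work: it checks that the $k$ guessed vertices are pairwise distinct (so that $\assgn$ is indeed an injection), and for every agent $i$ it looks up the $k-1$ precomputed values $c_{i,j}(\assgn(i),\assgn(j))$, sums them, and tests whether the result is non-negative; it accepts iff all $k$ such tests succeed. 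By the definition of $\util_i(\assgn)$ the procedure accepts iff \IRTDGshort has a \emph{yes}-answer, which yields the claimed membership.

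The only step that needs genuine care is the last one: the verification must be confined to a number of operations depending on $k$ alone, and in particular must not re-scan the topology. Precomputing all pairwise distances and all pairwise contributions is exactly what makes this work, collapsing the post-guess effort to $\Oh{k^2}$ table look-ups together with $\Oh{k^2}$ additions and $k$ sign tests; this is the refinement that separates \W[1] from the trivial ``guess $\assgn$, then verify in polynomial time'' argument, which only places the problem in \W[P]. Two further points should be spelled out: first, all numbers occurring in the verification (the utilities, the distance-factor values, and the partial sums of the $c_{i,j}$) are of polynomial bit-length in the input, so the arithmetic performed there is legitimate in the (unit-cost) RAM model underlying the characterization; and second, guessing a vertex is a single nondeterministic step in that model, so the total amount of nondeterminism is $\Oh{k}$ guessed numbers, exactly as the template requires.
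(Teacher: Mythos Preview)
Your proposal is correct and follows essentially the same approach as the paper: both precompute all pairwise distances in the topology during a polynomial-time deterministic phase, then nondeterministically guess the $k=\numagents$ vertex positions, and finally verify individual rationality in $f(k)$ steps by looking up the precomputed values. The only cosmetic difference is that the paper phrases this as a parameterized reduction to \textsc{Short Turing Machine Acceptance} (encoding the distance and utility information in the transition function), whereas you invoke the equivalent tail-nondeterministic RAM characterization of \W[1]; these are interchangeable formulations of the same argument.
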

\begin{proof}[Proof (sketch)]
	To show that \IRTDGshort belongs to \W[1], we provide a parameterized reduction to the \textsc{Short Turing Machine Acceptance} problem. Given an instance $\mathcal{I}=(G,\agents,\util,\dffn)$, we first compute all-pairs shortest path for all vertices of~$G$ using a known algorithm~\cite{Floyd1962,Warshall1962}. We use this information for the construction of the transition function of the equivalent Turing machine. The basic idea is that the Turing machine first guesses the positions of all agents. For this, the alphabet of the Turing machine contains one symbol for each vertex of~$G$. Then, the machine verifies whether every agent is assigned to a different vertex. Next, for every agent $i\in\agents$, it computes the utility $\util_i$ by enumerating the guessed positions of all other agents and checks whether this value is non-negative. This can be clearly done in $f(k)$-many steps, as the distances and the increase of utility can be encoded in the transition function using the computed distances and the distance factor function.
\end{proof}

Now, the natural question arises. Is the \XP algorithm of \Cref{thm:XP_agents} the best we can hope for, or is there an \FPT algorithm for the problem? We resolve this question in negative in our next result.

\begin{theorem}\label{thm:IRTDG:Wh:agents}
	For every distance factor function $\dffn$, it is \Wc to decide the \IRTDG problem parameterized by the number of agents $|\agents|$, even if the utilities are symmetric, the utility function uses two different values, and there are only two types of agents.
\end{theorem}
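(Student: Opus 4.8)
Membership in \W[1] for every distance factor function $\dffn$ is already given by \Cref{lem:IRTDG:W:agents}, so it remains to prove that the problem is \Wh. The plan is a parameterized reduction from \textsc{Independent Set} parameterized by the solution size $k$, which is a prototypical \Wh problem. Given a graph $H$, the constructed \IRTDGshort instance has two types of agents: $k$ \emph{selector} agents and a number $A$ of \emph{anchor} agents, where $A$ is a polynomial in $k$ that we choose as a function of $\dffn(1)$ and $\dffn(2)$; in particular $|\agents| = k + A$ is bounded by a function of $k$. The utilities are symmetric and use only two distinct nonzero values: a unit reward $+1$ between a selector and an anchor, and a negative conflict value $-\mu$ between any two selectors, for a rational $\mu$ computed from $\dffn(1)$ and $\dffn(2)$; anchors are indifferent towards one another.

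The topology is $H$ with $A+1$ additional \emph{anchor vertices}, each of which is made adjacent to all of $V(H)$. This topology has diameter $2$, so only $\dffn(1)$ and $\dffn(2)$ are relevant; moreover two vertices of $H$ are at distance $1$ if adjacent in $H$ and at distance exactly $2$ otherwise (they share an anchor-vertex neighbour), and every anchor vertex is at distance $1$ from every vertex of $H$. An anchor agent is individually rational under every placement, since it only receives non-negative contributions. A selector placed on an anchor vertex ends up at distance $1$ from every other selector placed on $V(H)$; for the chosen $\mu$ this makes its utility negative, so in any individually rational assignment all $A$ anchors sit on anchor vertices (there is always room, as we provided one spare) and all $k$ selectors sit on $V(H)$. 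For a selector on $v\in V(H)$, its utility then equals $A\,\dffn(1) - \mu\bigl(a\,\dffn(1) + (k-1-a)\,\dffn(2)\bigr)$, where $a$ is the number of other selectors placed on neighbours of $v$ in $H$; since $\dffn(1)>\dffn(2)$ this is strictly decreasing in $a$. One then picks $A$ and $\mu$ so that the value at $a=0$ is non-negative while the value at $a=1$ is already negative; the feasibility of such a choice for \emph{every} strictly decreasing $\dffn$ is exactly what the separation $\dffn(1)-\dffn(2)>0$ buys, in the same spirit as the $\min S\ge n^2$ normalisation used against \textsc{Equitable Partition} in the earlier sections. With such a choice a selector is individually rational precisely when $a=0$, so an individually rational assignment exists if and only if the $k$ selectors can occupy pairwise non-adjacent vertices of $H$, i.e.\ iff $H$ has an independent set of size $k$ (the converse direction is immediate from such a set).

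The heart of the argument -- and the place where most of the care goes -- is twofold: (i) making the whole construction \emph{oblivious to $\dffn$}, which forces the individual-rationality condition to be a counting statement (``$a=0$ versus $a\ge1$'') with a guaranteed gap rather than any numeric comparison of distinct distances, and forces one to exhibit, for every strictly decreasing $\dffn$, a rational $\mu$ (and a polynomial $A$) satisfying the resulting system of inequalities; and (ii) pinning every agent to its intended role using only two agent types and two utility values while keeping the profile symmetric -- concretely, ruling out all ``spurious'' individually rational assignments in which an anchor has drifted onto a vertex of $H$ or a selector onto an anchor vertex, which is done by enlarging $A$ until the single conflict value $-\mu$ is by itself enough to destroy such configurations. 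Finally, since $|\agents|$ is polynomially bounded in $k$ and \textsc{Independent Set} admits no $g(k)\cdot n^{o(k)}$ algorithm under ETH, the same lower bound transfers to \IRTDGshort, so the $|V(G)|^{\Oh{\numagents}}$-time algorithm of \Cref{thm:XP_agents} cannot be substantially improved.
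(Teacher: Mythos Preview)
Your overall strategy---reducing from \textsc{Independent Set}, attaching apex-type vertices, and making selectors repel one another while being attracted to a second agent type---is exactly the paper's idea. The gap is in step~(ii), where you claim that ``enlarging $A$'' rules out spurious placements. It does not. Consider the assignment that puts \emph{all} $k$ selectors on anchor vertices (there is room, since you provide $A+1\ge k$ of them) and all $A$ anchors on vertices of $V(H)$ (possible whenever $|V(H)|\ge A$, which holds for every sufficiently large input since $A$ depends only on $k$ and $\dffn$). Anchor vertices are pairwise non-adjacent in your topology, so every selector sees the other $k-1$ selectors at distance~$2$ and all $A$ anchors at distance~$1$; its utility is therefore $A\,\dffn(1)-\mu(k-1)\dffn(2)$, \emph{identical} to your intended $a=0$ value. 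This assignment is individually rational regardless of whether $H$ has an independent set of size $k$, so the reduced instance is always a yes-instance for large $n$ and the reduction collapses. Your argument only excludes mixed placements (some selectors on $V(H)$, some on anchor vertices); it says nothing about the fully swapped configuration, and enlarging $A$ only makes that configuration easier to realise.

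The paper sidesteps this by using a \emph{single} apex vertex and a \emph{single} guard agent: with $k\ge 2$ selectors and only one apex, the swap is impossible because at most one selector can sit on the apex, and that selector is then adjacent to every other selector. As a bonus, having exactly one guard means there is no guard--guard utility to define, so the instance genuinely uses only two utility values; in your construction with $A\ge 2$ anchors the anchor--anchor value~$0$ is a third value, violating the ``two different values'' clause of the theorem. Your sketch is easily repaired by collapsing to $A=1$ (and then the ``enlarge $A$'' device and the spare anchor vertex should both be dropped), at which point it coincides with the paper's proof.
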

\begin{proof}
	We provide a parameterized reduction from the \textsc{Independent Set} problem, which is very well-known to be \Wc when parameterized by the solution size~$k$~\cite{DowneyF1995}. Let $\mathcal{I} = (H,k)$ be an instance of the \textsc{Independent Set} problem. We construct an equivalent instance $\mathcal{J}$ of the \IRTDG problem as follows.
	
	First, the topology $G$ is just a copy of the graph $H$ with one added apex vertex~$x$. The set of agents consists of $k$ \emph{standard} agents $a_1,\ldots,a_k$ and a single \emph{guard} agent $g$. Next, let $\beta \in \mathbb{R}_{>0}$ be a number. 
	We fix an arbitrary distance factor function~$\dffn$. Finally, we define the utilities. For every pair of distinct standard agents $a_i,a_j\in\agents$, we set $\util_{a_i}(a_j) = \util_{a_j}(a_i) = -\beta$ and $\util_{a_i}(g) = \util_g(a_j) = \frac{(k-1)\cdot\dffn(2)\cdot\beta}{\dffn(1)}$. Clearly, there are two types of agents, and the utilities are symmetric.
	
	For the correctness, let $\mathcal{I}$ be a \emph{yes}-instance and $S\subseteq V(H)$ be an independent set of size~$k$. We assign the standard agents to vertices of~$G$ corresponding to vertices in~$S$, and we assign the guard to the apex vertex~$x$. Let $\assgn$ be the described assignment. Since~$S$ is an independent set, the distance between all pairs of standard agents is at least two. Additionally, the guard agent is a direct neighbor of all standard agents. Therefore, the utility of each standard agent $a_i$ is 
	\begin{multline*}
		\left(\sum_{j\in[k]\setminus\{i\}} \dffn(\dist(\assgn(a_i),\assgn(a_j)))\cdot \util_{a_i}(a_j)\right) + \dffn(1)\cdot\util_{a_i}(g) 
		= \left(\sum_{j\in[k]\setminus\{i\}} \dffn(2)\cdot (-\beta) \right) + \dffn(1)\cdot\util_{a_i}(g) \\= -((k-1)\cdot\dffn(2)\cdot\beta) + \dffn(1)\cdot\frac{(k-1)\cdot\dffn(2)\cdot\beta}{\dffn(1)} = -((k-1)\cdot\dffn(2)\cdot\beta) + (k-1)\cdot\dffn(2)\cdot\beta = 0.
	\end{multline*}
	Hence, as $g$ gains positive utility from all standard agents, $\assgn$ is individually rational.
	
	In the opposite direction, let $\mathcal{J}$ be a \emph{yes}-instance and let~$\assgn$ be an individually rational assignment. First, we show that no pair of standard agents are neighbors with respect to $\assgn$.
	
	\begin{claim}
		Let $\assgn$ be an individually rational solution. Then, for all pairs of distinct standard agents $a_i$ and $a_j$, it holds that $\{\assgn(a_i),\assgn(a_j)\}\not\in E(G)$.
	\end{claim}
	\begin{claimproof}
		For the sake of contradiction, suppose that it is the case and let $a_i$ and $a_j$ be standard agents occupying neighboring vertices. The utility of the agent $a_i$ is then
		\begin{multline*}
			\left(\sum_{\ell\in[k]\setminus\{i\}} \dffn(\dist(\assgn(a_i),\assgn(a_\ell)))\cdot \util_{a_i}(a_\ell)\right)
			+ \dffn(\dist(\assgn(a_i),\assgn(g)))\cdot \util_{a_i}(g)\\
			\leq -\dffn(1)\cdot \beta - (k-2)\cdot\dffn(2)\cdot\beta + \dffn(1)\cdot\util_{a_i}(g)\\
			= -\dffn(1)\cdot \beta - (k-2)\cdot\dffn(2)\cdot\beta + \dffn(1)\cdot\frac{(k-1)\cdot\dffn(2)\cdot\beta}{\dffn(1)}\\
			= -\dffn(1)\cdot \beta - (k-2)\cdot\dffn(2)\cdot\beta + (k-1)\cdot\dffn(2)\cdot\beta
			= -\dffn(1)\cdot \beta + \dffn(2)\cdot\beta,
		\end{multline*}
		from which it clearly follows that the utility of $a_i$ is strictly negative, as $\dffn(1) > \dffn(2)$. Therefore, such an assignment $\assgn$ would not be individually rational.    
	\end{claimproof}
	
	Consequently, for every individually rational assignment $\assgn$ we have that no pair of standard agents is assigned to neighboring vertices. Thus, setting $S = \{ v \mid \exists i \in [k]\colon \assgn(a_i) = v \}$ we create a solution for $\mathcal{I}$, and the instances are equivalent.
	
	Clearly, the construction can be done in polynomial time. Moreover, we have $\numagents = k + 1$, and therefore, the reduction is clearly a parameterized reduction, finishing the \Whness part of the proof, which, together with \Cref{lem:IRTDG:W:agents}, implies \Wc{}ness of the problem.
\end{proof}

Consequently, if we parameterize only with the number of agents, an \FPT algorithm cannot exist (unless $\FPT=\W[1]$). What is even more disturbing is that the simple brute-force algorithm proposed in \Cref{thm:XP_agents} is, under standard theoretical assumptions, asymptotically optimal.

\begin{theorem}
	Unless ETH fails, there is no algorithm solving the \IRTDG problem in $g(\numagents)\cdot |V(G)|^{o(\numagents)}$-time for any computable function~$g$.
\end{theorem}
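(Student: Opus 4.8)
The plan is to reuse the reduction from the proof of \Cref{thm:IRTDG:Wh:agents} essentially verbatim, observing that it is a \emph{linear} parameterized reduction that blows up the instance size only polynomially, and then to combine it with the standard ETH-based lower bound for \textsc{Independent Set}. Concretely, I would first recall the following well-known consequence of ETH (see, e.g., \cite{CyganFKLMPPS2015}): unless ETH fails, there is no algorithm that, given a graph $H$ and an integer $k$, decides whether $H$ contains an independent set of size $k$ in time $f(k)\cdot|V(H)|^{o(k)}$ for any computable function $f$.

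Next I would inspect the reduction used to prove \Whness in \Cref{thm:IRTDG:Wh:agents}. Given an instance $(H,k)$ of \textsc{Independent Set}, that reduction produces in polynomial time an equivalent instance of \IRTDGshort whose topology $G$ has exactly $|V(H)|+1$ vertices and whose set of agents has size $\numagents = k+1$. Hence $\numagents = \Theta(k)$ and $|V(G)| = \Theta(|V(H)|)$, and the reduction itself runs in time polynomial in $|V(H)|$. (Note that the reduction works for any fixed distance factor function, so the resulting lower bound applies in particular to the general problem in which $\dffn$ is part of the input.)

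Then I would argue by contradiction: suppose that for some computable $g$ there were an algorithm solving \IRTDGshort in time $g(\numagents)\cdot|V(G)|^{o(\numagents)}$. Composing the reduction above with this algorithm would decide \textsc{Independent Set} in time
\[
	\mathrm{poly}(|V(H)|) + g(k+1)\cdot\bigl(|V(H)|+1\bigr)^{o(k+1)} \;=\; f(k)\cdot|V(H)|^{o(k)}
\]
for a suitable computable function $f$, contradicting the ETH lower bound recalled above. The only point that needs any care is the bookkeeping in the exponent: since the reduction changes the parameter only by an additive constant, $o(\numagents)$ in the exponent of $|V(G)|$ becomes $o(k+1)=o(k)$ in the exponent of $|V(H)|$, and $(|V(H)|+1)^{o(k)}\le |V(H)|^{2\,o(k)}=|V(H)|^{o(k)}$; no super-constant slack is lost. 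I do not expect a genuine obstacle here — the substance of the statement is already contained in \Cref{thm:IRTDG:Wh:agents} together with the fact that its reduction is linear in the parameter and polynomial in size, so the proof is essentially this observation made explicit.
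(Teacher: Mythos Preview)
Your proposal is correct and essentially identical to the paper's own proof: both recall the ETH lower bound for \textsc{Independent Set}, observe that the reduction from \Cref{thm:IRTDG:Wh:agents} is polynomial in size with $\numagents = k+1$, and derive a contradiction by composition. The only difference is cosmetic---you spell out the exponent bookkeeping a bit more explicitly, and the paper cites \cite{ChenCFHJKX05} rather than \cite{CyganFKLMPPS2015} for the base lower bound.
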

\begin{proof}
	Recall the construction from \Cref{thm:IRTDG:Wh:agents}.  It is well-known that, unless ETH fails, there is no algorithm solving the \textsc{Independent Set} problem in $g(k)\cdot n^{o(k)}$ time for any computable function $g$~\cite{ChenCFHJKX05}. For the sake of contradiction, assume that there exists an algorithm $\mathbb{A}$ that solves \IRTDG in $h(\numagents)\cdot |V(G)|^{o(\numagents)}$ time for some computable function $h$. Then, assume the following algorithm for \textsc{Independent Set}. Given an instance~$\mathcal{I}$, use the construction above to produce an equivalent instance of the \IRTDG problem, call the algorithm $\mathbb{A}$, and return the same result for $\mathcal{I}$. The construction of the equivalent instance takes polynomial time and the algorithm runs in $h(\numagents)\cdot |V(G)|^{o(\numagents)} = h(k+1)\cdot |V(G)|^{o(k+1)}$ time. Overall, we obtain an algorithm for \textsc{Independent Set} running in $g(k)\cdot n^{o(k)}$ time, which is unlikely.
\end{proof}

The previous results clearly indicate that, in order to reveal at least some tractability, we need to further restrict the input instances. 
We start with a very strong intractability result, which shows that when the distance factor function remains unrestricted, there cannot be an \FPT algorithm with respect to the number of agents, even under the severe restriction of having a path topology.
The proof is based on a reduction from (a special case of) the following problem.

\prob{\textsc{Partitioned Subgraph Isomorphism} (PSI)}
{Two undirected graphs $G$ and $H$ with $|V(H)| \le |V(G)|$ ($H$ is \emph{smaller}) and a mapping $\psi\colon V(G) \to V(H)$.}
{Is $H$ isomorphic to a subgraph of $G$? I.e., is there an injective mapping $\phi\colon V(H) \to V(G)$ such that $\{\phi(u),\phi(v)\} \in E(G)$ for each $\{u,v\} \in E(H)$ and $\psi \circ\phi$ is the identity?}

\begin{theorem}\label{thm:Wh:path}
	\IRTDG is \Wc parameterized by the number of agents, even if the topology is a path. Unless ETH fails, there is no 
	algorithm solving the \IRTDG problem in $g(\numagents)\cdot |V(G)|^{o(\frac{\numagents}{\log\numagents})}$-time 
	for any computable function~$g$.
\end{theorem}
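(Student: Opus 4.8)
I would prove both statements by a single reduction from \textsc{Partitioned Subgraph Isomorphism} (PSI), which is the standard source of $n^{o(k/\log k)}$-type lower bounds: Marx's lower bound says that PSI parameterised by $k:=|E(H)|$ is \Wh and, unless ETH fails, admits no $g(k)\cdot|V(G)|^{o(k/\log k)}$-time algorithm, and this persists for $H$ drawn from any class of unbounded treewidth — in particular for connected $H$, so that $|V(H)|\le k+1$. From an instance $(G,H,\psi)$ of this restricted PSI I would build an \IRTDGshort instance with one \emph{vertex-agent} $a_w$ per $w\in V(H)$, one \emph{edge-agent} $b_{uv}$ per $\{u,v\}\in E(H)$, and $\Oh{1}$ auxiliary agents, so that $\numagents=\Theta(k)$; since the topology will be a path on $|V(G)|^{\Oh{1}}$ vertices, a reduction of this shape is a parameterised reduction (hence gives \Whness) and, composed with Marx's bound, gives the ETH statement. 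Together with \Cref{lem:IRTDG:W:agents} this also yields \Wc{}ness.

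The topology is a path whose vertex set is cut into consecutive \emph{blocks} separated by long padding stretches: a \emph{selection block} $B_w$ of $|\psi^{-1}(w)|$ vertices for each $w\in V(H)$, so that placing $a_w$ on the $t$-th vertex of $B_w$ encodes the choice of the $t$-th colour-$w$ vertex of $G$; and, for each $\{u,v\}\in E(H)$, an \emph{edge block} $B_{uv}$ whose vertices are indexed by those pairs from $\psi^{-1}(u)\times\psi^{-1}(v)$ that form edges of $G$, so that placing $b_{uv}$ there encodes the choice of a $G$-edge between the two colour classes. The key leverage is that $\dffn$ may be an arbitrary strictly decreasing function while on a path $\dist(\assgn(x),\assgn(y))$ is just the difference of coordinates; hence, exactly in the spirit of \Cref{thm:NPc_path}, I would define $\dffn$ on a geometric hierarchy of scales and choose the utilities so that (i) huge weights force a fixed \emph{macro-layout} — each $a_w$ inside $B_w$ and each $b_{uv}$ inside $B_{uv}$ — provable by a chain of inequalities as in \Cref{thm:NPc_path}; and (ii) within each edge block the position of $b_{uv}$ is \emph{synchronised} with the positions of $a_u$ and $a_v$, i.e.\ the selected $G$-edge has its endpoints equal to the vertices selected by $a_u$ and $a_v$. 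Since ``equality of selected indices'' is, on a path, a single affine constraint on coordinates, each synchronisation is enforced by a large utility weight between the two agents together with the matching value of $\dffn$; a PSI solution $\phi$ then gives an individually rational assignment (place every agent on the block vertex dictated by $\phi$), and conversely the forced macro-layout plus the synchronisations read off a subgraph isomorphism.

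The main obstacle — and the reason this is the paper's heaviest construction — is that each agent's utility is a sum over \emph{all} pairs of agents weighted by $\dffn$ of their distance, so every gadget can interfere with every other. One therefore has to engineer a strict scale separation (as with the $2^{3\ell},2^{2\ell},2^{\ell}$ magnitudes in \Cref{thm:NPc_path}) in which the macro-layout constraints dominate the colouring constraints, which dominate the adjacency/synchronisation constraints, so that no lower-order contribution can rescue a violated higher-order one; simultaneously one must squeeze the inherently two-dimensional adjacency relation of $G$ into one-dimensional path distances and a single monotone $\dffn$ — which is precisely what the product-indexed edge blocks together with the two affine constraints per edge-agent (one tying $b_{uv}$ to $a_u$, one to $a_v$) accomplish — all while keeping the path length, and hence the bit-length of $\dffn$ and of the utilities, polynomial in $|V(G)|$. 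Verifying these calibrations is the bulk of the work; once done, \Whness and the ETH lower bound follow as described above, since an $g(\numagents)\cdot|V(G')|^{o(\numagents/\log\numagents)}$ algorithm for \IRTDGshort would, via the reduction, solve PSI in $g'(k)\cdot|V(G)|^{o(k/\log k)}$ time.
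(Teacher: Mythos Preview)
Your high-level strategy --- reduce from PSI, lay out selection blocks along a path, craft $\dffn$ on a geometric scale hierarchy so that a fixed macro-layout is forced first and consistency checks happen at lower scales --- matches the paper, and your identification of the scale-separation bookkeeping as the main workload is accurate. The gap is in the synchronisation step. You assert that ``equality of selected indices is, on a path, a single affine constraint on coordinates'', but you index $B_{uv}$ only by the \emph{valid} $G$-edges between the two colour classes; the position of $b_{uv}$ is then an arbitrary bijection with a subset of $\psi^{-1}(u)\times\psi^{-1}(v)$, so $|\assgn(a_u)-\assgn(b_{uv})|$ is not affine in the $u$-endpoint of the selected edge. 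Even a full product layout does not help: a single distance to $a_u$ depends on \emph{both} coordinates of the pair, and since $\dffn$ must be strictly decreasing you cannot flatten it over the irrelevant one. The claim that ``two affine constraints per edge-agent accomplish the synchronisation'' therefore does not go through as stated. You also budget only $\Oh{1}$ auxiliary agents, whereas confining each selector to its block \emph{and} to a sparse set of legal positions inside it seems to need per-gadget machinery; the paper uses $\Theta(|V(H)|+|E(H)|)$ anchor/helper pairs for exactly this purpose.

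The paper's actual mechanism is different and is the missing idea. Each selector agent is paired with a helper whose utilities toward the two endpoint guards pin it to a prescribed interval, and each interval carries its own anchor pair at its left end; the anchor's utility toward the selector/helper pair (of the form $f(d)-2f(d+1)$) restricts the selector to a discrete set of positions inside the interval. For the consistency check, the allowed positions for vertex-selectors are spaced exactly $2m$ apart while all edge-selector positions together span less than $2m$; hence every pair (vertex-position, edge-position) realises a \emph{distinct} distance. The incidence relation of $G$ is then encoded directly into $\dffn$: at each ``good'' distance $d$ (vertex incident to edge) a halving step is inserted, otherwise $\dffn$ decreases by $1$. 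With $u_{v_i}(e_j)=1$ and $u_{v_i}(e'_j)=-2$ (the helper $e'_j$ one step to the right of $e_j$), the contribution $f(d)-2f(d+1)$ is non-negative iff $d$ is good; since $H$ is taken $3$-regular, $v_i$ is happy iff all three of its edge-agents sit at incident positions. This uniqueness-of-distances trick, rather than an affine identity, is what makes the reduction work.
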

\begin{proof}
	We show \Whness by a parameterized reduction from the \textsc{Partitioned Subgraph Isomorphism} problem, which is known to be \Wc when parameterized by the solution size~$k$ even on $3$-regular graphs. Furthermore, there is no algorithm $\mathbb{A}$ and function $g$ such that $\mathbb{A}$ correctly decides every instance of PSI with the smaller graph $H$ being 3-regular in time $g(|V(H)|)n^{o(|V(H)|/\log |V(H)|)}$, unless ETH fails (see {\cite{Marx2010}} and \cite{EibenKPS19}).
	
	Let $(G,H, \psi)$ be an instance of PSI with $H$ 3-regular and denote \mbox{$k=|V(H)|$}. 
	Note that the mapping $\psi\colon V(G) \to V(H)$ partitions the vertices of $V(G)$ into $x=|V(H)|$ many parts $V_1,\ldots, V_x$, each corresponding to a specific vertex of~$H$. Moreover, we wish to select in each part $V_i$, $i\in [x]$, exactly one vertex $v_i$, such that if $vw\in E(H)$ and $V_i$ corresponds to $v$ and $V_j$ corresponds to $w$, then $v_iv_j$ is an edge in $G$. Notice that if $vw\notin E(H)$, then the edge in $v_iv_j$ is not required to be in $E(G)$, however, it is also not forbidden. Hence we can, without loss of generality, assume that $G$ does not contain edges between $V_i$ and $V_j$ if these two vertex sets correspond to vertices in $H$ that are not adjacent. It follows that we can also partition the edges of $E(G)$ into $y= |E(H)|$ many parts $E_1, \ldots, E_y$, each corresponding to a specific edge of $H$. This is important, because, as is usual for a reduction from PSI, we will have a ``gadget'' to select a single vertex in each $V_i$, a gadget to select an edge in each $E_j$, and then a way to check that this selection is consistent.
	
	We will now construct an equivalent instance $\mathcal{J}$ of the \IRTDG problem such that the topology of $\mathcal{J}$ is a path on $|V(G)|^\Oh{1}$ many vertices. A very crude idea of this reduction is to assign each of the vertex-parts $V_i$, $i\in [x]$,  and each of the edge-parts $E_j$,  $j\in [y]$ an interval on the path such that these intervals are disjoint. Then using two additional ``guard'' vertices placed at the endpoints of the path, $\Oh{|V(H)|+|E(H)|}$ many so-called ``anchor'' vertices, and a clever choice of the function $f$, we force each of the intervals to contain exactly two consecutive vertices at some ``allowed'' positions inside the interval, that represent a selection of specific edge or vertex in this interval. Finally, using basically the same trick we used to force the ``allowed'' consecutive vertices in a selection gadget to be only at specific positions -- specific distances from an ``anchor'' vertices -- we are able to force that the selected vertices $v_i\in V_i$ and the selected edges $e_j\in E_j$ are consisted, i.e., if $E_j$ is associated with an edge of $H$ whose one endpoint is the vertex associated with $V_i$, then $v_i$ is an endpoint of $e_j$. It is important that the intervals for these selection gadgets are placed carefully and all the distances for which we need to set up the value of $f$ are different.
	
	Now let us go a bit more into detail. For ease of notation let $n=|V(G)|$, $m=|E(G)|$. For the sake of exposition, we also assume that $n$ is large enough, say $n > 100$. 
	We can use this assumption without loss of generality, as PSI on instances with $n \le 100$ is solvable in constant time by trying all possibilities. Note that this assumption is not strictly necessary for the reduction, but allows us a slightly ``cleaner'' definition of the function $f$ without additional constants that would be necessary to handle cases when $n$ is small.  
	We let the topology of $\mathcal{J}$ be the path $P$ on $40mn$ vertices.  
	
	The set of agents $\agents$ consists of the following.
	\begin{itemize}
		\item Two guard agents $g_1,g_2$, these will be placed at the endpoints of $P$ and the intervals for ``selector'' gadgets will be defined by their distance from $g_1$.
		\item A "dummy" agent $d_2$ to make $g_2$ ``happy'' if $g_1$ is at the other endpoint and $d_2$ exactly next to it.
		\item $x = |V(H)|$ many ``vertex-selector'' pairs of agents $v_i, w_i$, $i\in [|V(H)|]$. The idea is that $v_i$ represents the selection of the vertex in $V_i$ and $w_i$ is a helper agent that fixes $v_i$ in the interval for $V_i$.
		\item $y = |E(H)|$ many ``edge-selector'' pairs of agents $e_j, e'_j$, $j\in [|E(H)|]$. Again $e_j$ represents the selection of an edge in $E_j$ and $e'_j$ is the helper agent to fix $e_j$ in the interval of $P$ selected for $E_j$.
		\item $x+y$ many ``anchor'' pairs of agents  $a_i, b_i$ $i\in [x+y]$. These are designed such that $a_i$ is at the start of each ``selector'' gadget, and we use them together with the distance factor function $f$ to force the selector-pair to occupy only a specified subset of vertices inside the selector gadget.  The agent $b_i$ is again a ``helper'' agent that will be placed next to $a_i$.
	\end{itemize}

	We will now describe how we set the distance factor function $f$ and how the utilities of the agents are defined. We remark that if we do not define a utility $u_i(j)$ of an agent $i$ toward agents $j$, then we assume that $u_i(j) = 0$. The function $f$ is such that for any $d\in \{1,\ldots, |V(P)|-1\}$ we have $f(d) = 2^{p_d(n)}-q_d$, where $0\le q_d < d$ and $n^3 \ge p_d(n)\ge n^3 - 7n^2$.
	In addition, steps from $d$ to $d+1$ in the function $f$ are always one of the following four types: (i) $f(d+1) = f(d)-1$, (ii) $f(d+1) = 2^{p_d(n)-1}$, (iii)
	$f(d+1) = 2^{p_d(n)-1}-q_d$, and (iv) $f(d+1) = 2^{p_d(n) -  n}$. 
	We start with $f(1) = 2^{n^3}$ and $f(2) = 2^{ n^3-n}$. Unless we specify otherwise, we have $f(d+1) = f(d)-1$.  
	
	Let us first fix $g_1$ and $g_2$ to be at the endpoints of $P$. This is easily done by setting $u_{g_2}(d_2) = 1$, $u_{g_2}(g_1) = - \frac{f(1)}{f(|V(P)|- 1)}$ and letting utility of $g_2$ towards any other agent to be $0$. 
	Since $f$ is a decreasing function, it follows that for any assignment $\lambda$, we have 
	
	\begin{flalign*}
		u_{g_2}(\lambda) &= f(\dist_P(\lambda(g_2), \lambda(d_2))) 
		\quad	- \frac{f(1)}{f(|V(P)|- 1)} \cdot f(\dist_P(\lambda(g_2), \lambda(g_1))) && \\
		&\le   f(1) - 
		\frac{f(1)}{f(|V(P)|- 1)} \cdot f(\dist_P(\lambda(g_2), \lambda(g_1))) && \\
		&\le   f(1) - f(1) = 0 &&
	\end{flalign*}
	where inequality is achieved if and only if  $\dist_P(\lambda(g_2), \lambda(d_2)) = 1$ and $\dist_P(\lambda(g_2), \lambda(g_1)) = |V(P)| - 1$. Therefore, for an assignment $\lambda$ to be individually rational, we necessarily have $g_1$ and $g_2$ are at the endpoints of $P$. 
	
	For the sake of the exposition, we assume that $g_1$ is on the left endpoint of $P$, i.e., whenever we say ``left'' we mean closer to $g_1$ and by ``right'' we mean closer to~$g_2$. 
	
	Note that all the remaining agents come in pairs, where one agent is ``selector'' or ``anchor'' and the other is ``helper''. The idea is that we use the utility of the ``helper'' agent towards $g_1$, $g_2$, and its pair to force its pair next to the helper and with distance to $g_1$ being some fixed interval of length less than $2mn$.  Moreover, this interval will be in distance at least $10mn$ and at most $17mn$ from $g_1$ (so in distance at least $23mn-1$ from $g_2$).
	Let $(s,h)$ be a pair of agents -- $s$ a selector/anchor agent and $h$ a ``helper'' agent. And let $\ell,r \in \mathbb{N}$ such that $\ell<r<\ell + 2mn$. Assume that we want to force that $h$ to be placed in the distance at least $\ell+1$ and at most $r-1$ from $g_1$ and at the same time $s$ to be next to $h$. In order to do this, we will need to fix values of $f(i)$ for $i\in \{\ell, \ell+1, \ldots, r\}$ and for $i\in \{|V(P)|-1-r, |V(P)|-r, \ldots, |V(P)|-1-\ell\}$. 
	We set it as follows: 
	
	\begin{itemize}
		\item $f(\ell) = 2^{p_\ell(n)}$ for some polynomials $p_\ell(n) = n^3 - \Oh{n^2}$ such that $\frac{f(\ell-1)}{2} \le f(\ell) < f(\ell-1)$ (i.e., there is type-(ii) step from $f(\ell-1)$ to $f(\ell)$);
		\item $f(\ell+1) = 2^{p_a(n)- n}$ (type-(iv) step);
		\item $f(|V(P)|-1-r) = 2^{p_r(n)}$ for some polynomials $p_r(n) = n^3 - \Oh{n^2}$ such that $\frac{f(|V(P)|-2-r)}{2} \le f(|V(P)|-1-r) < f(|V(P)|-2-r)$;
		\item $f(|V(P)|-r) = 2^{p_a(n)- n}$ (type-(iv) step);
		\item $f(i+1) = f(i)-1$ for $i\in \{\ell+1, \ldots, r\}\cup \{|V(P)|-r, \ldots, |V(P)|-\ell \}$ (type-(i) step);
	\end{itemize}
	
	The goal is now to set up the utilities of $h$ towards $g_1$, $g_2$, and $s$ so that the negative contribution of $g_1$ overcomes the positive contribution of $s$ if $h$ is closer to $g_1$ then at distance $\ell+1$, the negative contribution of $g_2$ overcomes the contribution of $s$ if $h$ is closer to $g_2$ than $|V(P)|-r$. In addition, we utilize the fact that all steps within this range of distances from $h$ to $g_1$ and $g_2$ are type-(i) and that the jump between $f(1)$ and $f(2)$ is type-(iv) to achieve that $s$ has to be next to $h$ else the negative contribution of $g_1$ and $g_2$ overcomes the positive contribution of $s$.
	
	We set, %
	
	\begin{itemize}        
		\item $u_{h}(s) = 1$; 
		\item $u_{h}(g_1) = - \frac{f(1)}{4f(\ell+1)} = 2^{n^3-p_\ell(n)+ n-2}$;
		\item $u_{h}(g_2) = - \frac{f(1)}{4f(|V(P)|-r)} = 2^{n^3-p_r(n)+ n -2}$.
	\end{itemize}
	It is easy to verify that in order for the total contribution of $g_1$ and $g_2$ towards the utility of $s$ to be at least $-f(1)$, the distance between $g_1$ and $h$ has to be at least $\ell+1$ and at most $r-1$. In this case, the contribution of each is at most $-\frac{f(1)}{4}$. Recall that distance between $\ell$ and $r$ is at most $2mn\le 2n^3$. From the construction of $f$ it follows that in this case, the contribution of $g_1$ is at most $-(f(\ell+1) - 2n^3)\cdot \frac{f(1)}{4f(\ell+1)} = -f(1)\cdot (\frac{1}{4} - \frac{2n^3}{f(\ell+1)}) > -f(1)\cdot (\frac{1}{4} - \frac{2n^3}{2^{n^3}- 7n^2})$. By the choice of $f$ and assumption that $n > 100$, we get that the contribution of $g_1$ is between $-\frac{f(1)}{4}$ and   $-\frac{f(1)}{8}$. Analogously, we get that contribution of $g_2$ towards the utility of $h$ is between $-\frac{f(1)}{4}$ and   $-\frac{f(1)}{8}$. Since $f(2) < \frac{f(1)}{4}$, we get that in order for $h$ to get non-negative total utility, $s$ has to be next to $h$.

	If the $(s,h)$ is an anchor-pair then we are fixing the helper $\beta$ agent at a single possible distance. That is $r=\ell+2$ above, and we are fixing $h$ at distance $\ell+1$ from $g_1$. In every anchor-pair, we want to use the utility of $g_1$ to fix $s$ to be to the left of $h$. We do it by setting $u_{g_1}(s) = \frac{f(1)}{f(\ell)}$ and $u_{g_1}(h) = - \frac{f(1)}{f(\ell+1)}$. Since we already know that $h$ has to be at distance $\ell+1$ from $g_1$, we get that if $s$ is at distance $\ell$, then the contribution of this pair towards the utility of $g_1$ is $0$ and otherwise it is negative. In order for all the contributions to be $0$, in all anchor pairs, we get that the anchor-agent $s$ is always to the left of the helper-agent $h$.
	
	Let us now describe the edge-selector gadget for the edge-set $E_j$, $j\in [y]$. 
	In the edge selection gadget for $E_j$, we will have the anchor-pair $(a_{x+j}, b_{x+j})$ fixed such that $a_{x+j}$ is at some distance $\ell_{x+j}$ from $g_1$, $b_{x+j}$ is at distance $\ell_{x+j}+1$; $\ell_{x+j}$ will be between $15mn$ and $16mn$ as seen later. We restrict the edge-selector pair $(e_j,e'_j)$ such that $e'_j$ is at distance between $\ell_{x+j}+5$ and $\ell_{x+j}+4+|E_j|$ (so there are $|E_j|$ possible positions of $e'_j$).  
	Now to make sure that $e_j$ is to the left of $e'_j$, we let $u_{a_{x+j}}(e_j) = 1$ and $u_{a_{x+j}}(e_j) = -1$. Hence, $e_j$ is only allowed to be at distance from $g_1$ that is between $\ell_{x+j}+4$ and $\ell_{x+j}+3+|E_j|$. Each such possible position of $e_j$ corresponds to a selection of a single edge in $E_j$). 
	
	The next anchor for the edge-set $E_{j+1}$ is always placed at the position $\ell_{x+j}+|E_j| + 10$. This way all the agents that are forced to be placed in the edge-selector gadget, will be at distance at least some $\ell_{x+1}$ and at most $\ell_{x+1} + m+10y < \ell_{x+1} + 2m$, where $\ell_{x+1} = 15mn + \Oh{x}$. 
	We will use the fact that all edge-gadgets are inside a rather compact region of length less than $2m$ in designing the vertex-selector gadget. Any two possible positions for a vertex-selector pair of agents will be at distance $2m$, so each possible placement of a vertex selector will have unique distances toward the possible placements of edge-selector pairs and we will be able to set the value of $f$ for each of these distances separately.
	
	For the vertex-selector gadget associated with the vertex set $V_i$, we will again have anchor-pair $(a_i, b_i)$ fixed at distance $\ell_i$, $\ell_i+1$, respectively, where $\ell_i\in \{10mn,\ldots, 12mn+ 10x\}$ and $\ell_1 = 10mn$, the exact positions of $\ell_i$ depends on $|V_1|, |V_2|, \ldots, |V_{i-1}|$. In the pair $(v_i, w_i)$ we allow $w_i$ to be at distance between  $\ell_i+5$ and $\ell_i + 5 + 2m\cdot (|V_i| -1) $ from $g_1$ and the distance for the next anchor pair $\ell_{i+1} = \ell_i + 2m\cdot |V_i|$. 
	Now we let $u_{a_i}(v_i) = 1$ and $u_{a_i}(w_i) = -2$. We note that each of the distances from $g_1$ and $g_2$ for which we described how to set the values of $f$ above will be more than $10mn$, so we are free to now set $f(d)$ for $d\in \{3, \ldots, 2mn+5\}$. Starting from $f(2) = 2^{n^3- n}$ (as set before), we let for $d\in \{2, \ldots, 2m\cdot n+4\}$, $f(d) = 2^{p_d(n)} - q_d$ for some polynomial $p_d$ and constant $q_d$ depending on $d$ (note for $d= 2$ we have $p_2(n)= n^3- n$ and $q_2(n) = 0$) and we define: 
	
	\[
	f(d+1) = 
	\begin{cases*}
		2^{p_d(n)-1}-q_d & \text{ if  $d \equiv 4 \mod 2m$} \\
		2^{p_d(n)}-q_d - 1 & \text{ otherwise }
	\end{cases*}
	\]
	
	Note that $q_d\le d$. The utility of the anchor agent $a_i$ if $v_i$ is at distance $d$ from $a_i$ and $w_i$ at distance $d+1$ is $f(d) - 2f(d+1)$. Since $d$ has to be between $4$ and $2m\cdot (|V_i|-1) + 4 < 2m\cdot n+4$, to keep the utility of $w_i$ non-negative 
	\[
	f(d) - 2f(d+1) = 
	\begin{cases*}
		q_d \quad \text{ if } d \equiv 4 \mod 2m \\
		-2^{p_d(n)}+q_d + 1 \quad  \text{ otherwise }
	\end{cases*}
	\]
	
	Hence, in order for the utility of the anchor $a_i$ to be positive, $v_i$ has to be at distance from $g_1$ that is of form $\ell_i+4+ \iota \cdot 2m$ for $\iota\in \{0, \ldots, |V_i| - 1\}$, each distance representing a single and distinct vertex in $V_i$. 
	
	Finally, the position of the vertex gadgets is to the left of the edge gadgets and we leave a gap of $3 n\cdot m$ between the rightmost allowed position of any vertex-selector pair and the position of the leftmost anchor pair for any edge-selector gadget (so $\ell_{x+1}$ is $3 n\cdot m$ many positions to the right of the last possible allowed position of $v_x$ and we get $\ell_{x+1} = 15m(n-1) + \Oh{x}$). This way the function $f$ is not yet set for any distance from the closest distance between any two positions from which one is in a vertex-gadget and the other in an edge-gadget. 
	Recall that each allowed position for $e_j$ in an edge-gadget represents a single edge in $E_j$ and each allowed position for $v_i$ in a vertex-gadget represents a single vertex in $V_i$. For $v\in V(G)$ and $e\in E(G)$,
	let $d_v^e$ be the distance between the position in a vertex-gadget representing the vertex $v$ and the position in an edge-gadget representing the edge $e$. Note that because of the gaps of length at least 
	$2m$ between any two allowed positions for a vertex-selector pair and gaps of length strictly less than $2m$ between any two allowed positions for an edge-selector pair, we have that if $d_v^e = d_w^f$ for $v,w\in V(G)$ and $e,f\in E(G)$, then $v=w$ and $e=f$. 
	Now let $D =\{d_v^e\mid v\in e\}$ be the set of distances between a position representing a vertex $v\in V(G)$ and a position representing an edge $e\in E(G)$ such that $v$ is an endpoint of $e$. Note that for all $d\in D$ we have $3m\cdot n\le d \le 2m\cdot n + 3m\cdot n + 2m \le  6m\cdot n$.
	
	Note that for $d= 2m\cdot n+5 $, we have already set $f(d) = 2^{p_d(n)}-q_d$, where $p_d(n) = n^3-\Oh{n^2}$ is a polynomial. Now for $d\in \{2m\cdot n+5, \ldots, 6m\cdot n\}$, where $f(d) = 2^{p_d(n)}-q_d$ we let: 
	
	\[
	f(d+1) = 
	\begin{cases*}
		2^{p_d(n)-1}-q_d & \text{ if  $d\in D$} \\
		2^{p_d(n)}-q_d - 1 & \text{ otherwise }
	\end{cases*}
	\]
	
	Finally, if $V_i$ is associated with a vertex $v\in V(H)$ and $E_j$ is associated with an edge $e\in E(H)$ such that $v$ is an endpoint of $V_i$, then we set the utility of the agent $v_i$ towards agents $e_j$ and $e'_j$ such that $u_{v_i}(e_j)= 1$ and $u_{v_i}(e'_j)= -2$. It follows that the contribution of $e_j$ and $e'_j$ in this case towards the utility of $v_i$ (assuming that their positions are consistent with restrictions described above to keep $g_1, g_2$, anchor-pairs, $w_i$ and $e'_i$ individually rational) is 
	
	\[
	f(d) - 2f(d+1) = 
	\begin{cases*}
		q_d \quad \text{ if } d\in D \\
		-2^{p_d(n)}+q_d + 1 \quad  \text{ otherwise }
	\end{cases*}
	\]
	
	However, $d\in D$ if and only if $v_i$ and $e_j$ represent the selection of a vertex $v\in V(G)$ and an edge $e\in E(G)$, respectively, such that $v$ is an endpoint of $e$. Furthermore, note that the degree of every vertex in $H$ is exactly three, so $v_i$ has utility for three edge-selector agent pairs. Let $V_i$ represent a vertex in $H$ that is endpoint of edges represented by edge-sets $E_{j_1}$, $E_{j_2}$, and $E_{j_3}$ and let $d_1 < d_2 < d_3$ be the distances from assigned position of $v_i$ to assigned positions of $e_{j_1}$, $e_{j_2}$,  and $e_{j_3}$, respectively.
	Notice that $q_d$ is always at most $d$ and hence we have $2^{p_{d_1}(n)} > 2^{p_{d_2}(n)} > 2^{p_{d_3}(n)} > d_1+d_2+d_3+1 \ge q_{d_1} + q_{d_2} + q_{d_3} + 1$. Hence, the only way that $v_i$ receives a non-negative utility is if all of the distances $d_1$, $d_2$, and $d_3$ are in $D$. However, this means that the vertex in $G$ associated with the position of the agent $v_i$ is an endpoint of all of the three edges associated with the positions of agents $e_{j_1}$, $e_{j_2}$, and $e_{j_3}$.  
	
	This now fully describes the reduction. Note that in the definition of $f$ there are $4(x+y)$ type-(iv) and type-(ii) steps and $n + 2m$ type-(iii) and the remaining steps are type-(i). So for each $d\in \{1,\ldots, |V(P)|-1\}$ we indeed have $f(d) = 2^{p_d(n)}+q_d$, where $q_d < d$ and $n^3 \ge p_d(n)\ge n^3 - 4(x+y)n - n - 2m > n^3 - 7n^2$. In addition, all the utilities are one of $0$, $1$, $-1$, $-2$, the ratio of two values of $f$ at different positions multiplied by a small constant. Therefore, each used number can be encoded using at most $\Oh{n^3}$ bits. Since the length of $P$ is also $\Oh{n^3}$, the size of the reduced instance of \IRTDGshort is polynomial in the size of the original instance of PSI.
	
	It is rather straightforward to see that given a solution to the instance $(G,H,\psi)$ of PSI, we can place $g_1$, $g_2$ at the endpoints, anchor agent-pairs at the specified anchor positions, the vertex-selection/edge-selection agent-pairs at the position associated with the vertices and edges selected in $G$ by the injective mapping $\phi$. 
	
	On the other hand, given an allocation of the agents for the reduced instance of \IRTDGshort{}, it is again rather straightforward to observe that each vertex-selection agent-pair $(v_i, w_i)$ is at a position representing the selection of a vertex in $V_i$ and that if $H$ contains an edge between vertices associated with sets $V_{i_1}$ and $V_{i_2}$, then there is a edge-selection agent-pair $(e_i, e'_i)$ responsible for selection of an edge between $V_{i_1}$ and $V_{i_2}$ and the fact that utilities of both $v_{i_1}$ and $v_{i_2}$ are positive guarantees that the edge associated with the allocation of $e_j$ is exactly the edge between the vertex associated with the allocation of $v_{j_1}$ and the vertex associated with the allocation of~$v_{j_2}$. 
	
	Finally, $|N| = 3 + 2 (|V(H)| + |E(H)|)$, where $|E(H)| = \frac{3|V(H)|}{2}$, as $H$ is 3-regular. So $|N| = 5|V(H)| + 3$ and the reduction is parameter-preserving which implies that \IRTDGshort{} is \W[1]-hard even on such restricted instances. In addition, any algorithm for \IRTDGshort{} that runs it time $g(|N|)\cdot |V(P)|^{o(\frac{|N|}{\log |N|})}$ gives an algorithm for PSI that runs in time $g(5|V(H)| + 3)\cdot |V(G)|^{o(\frac{|V(H)|}{\log |V(H)|})}$, which is not possible unless ETH fails. 
\end{proof}

On the other hand, if we additionally restrict the enmity graph, we finally obtain fixed-parameter tractability. Namely, if we parameterize by the number of agents, the topology is a path, and all edges in the enmity graph are oriented towards a single agent, the problem becomes tractable.

\begin{theorem}\label{thm:FPT_path}
	For any distance factor function $\dffn$, if all the edges in the enmity graph are oriented towards one agent $p$ and the topology is a path, then the \IRTDG problem is in \FPT parameterized by the number of agents $\agents$.%
\end{theorem}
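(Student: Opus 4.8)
The plan is as follows. Write $k=\numagents$. Since every arc of the enmity graph is directed toward $p$, we have $\util_i(j)\ge 0$ for every ordered pair $(i,j)$ with $j\neq p$. Hence $\util_p(\assgn)\ge 0$ for \emph{every} assignment $\assgn$, and $\util_i(\assgn)\ge 0$ automatically for every agent $i$ with $\util_i(p)\ge 0$. So the only inequalities that can fail are those of the \emph{troubled} agents $T=\{\,i\neq p : \util_i(p)<0\,\}$, and $|T|\le k-1$. Thus we must place all $k$ agents on the path $P$, but a placement only has to make the $\le k-1$ troubled agents happy.

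First I would guess the left-to-right order $\pi$ in which the $k$ agents occupy vertices of $P$ ($k!$ choices). Since deleting empty vertices before the first agent or after the last one changes no inter-agent distance, an assignment compatible with $\pi$ is fully described by a gap vector $(d_1,\dots,d_{k-1})$ with each $d_t\ge 1$ and $\sum_t d_t\le |V(P)|-1$, where $d_t$ is the path-distance between the $t$-th and $(t{+}1)$-st agent; the distance between the agents of ranks $a<b$ is $d_a+\dots+d_{b-1}$, so every $\util_i(\assgn)$ becomes an explicit function of $(d_1,\dots,d_{k-1})$. It then remains, for each fixed $\pi$ with $p$ at rank $r$, to decide in $g(k)\cdot |V(P)|^{\Oh{1}}$ time whether an admissible gap vector satisfies all troubled inequalities.

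The structural lever I would use is a sign/monotonicity property. For a troubled agent $i$ of rank $a$, the term $\util_i(p)\,\dffn(\dist(i,p))$ is the \emph{only} negative summand of $\util_i(\assgn)$; moreover, for any gap $d_t$ that does \emph{not} lie strictly between $i$ and $p$ (i.e.\ $t<\min(a,r)$ or $t\ge\max(a,r)$), agent $p$ lies on the same side of that gap as $i$, so every inter-agent distance affected by $d_t$ concerns a pair with non-negative utility and hence $\util_i(\assgn)$ is non-increasing in $d_t$. Consequently one may assume that every gap which is not strictly between some troubled agent and $p$ equals $1$; the remaining \emph{active} gaps form a contiguous block ending at $p$ on the left and one starting at $p$ on the right (reaching out to the outermost troubled agent on each side), and they enter the troubled inequalities only through partial sums — equivalently through the distances from $p$ to the troubled agents and between troubled agents.

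I expect the main obstacle to be the reduced problem on the active gaps: there can still be $\Theta(k)$ of them, they range over $\{1,\dots,|V(P)|\}$, and $\util_i$ depends on them non-linearly (through $\dffn$) and non-monotonically, so neither brute force over gap vectors (which yields only an \XP bound) nor an integer-programming formulation applies directly, and examples with $\dffn(d)=1/d$ show the chosen gaps genuinely must be as large as input magnitudes dictate. My intended resolution is to exploit that the active gaps are linearly ordered along $P$: sweep the active gaps outward from $p$ and run a dynamic program whose state keeps only the running distance from $p$ and, for each troubled agent already passed, one bit recording whether its inequality is already met — the monotonicity above being used to argue that the choices within the block of active gaps between two consecutive troubled agents decouple enough to keep the state bounded by a function of $k$ — while iterating over the polynomially many candidate values for the two outermost distances from $p$. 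Making this dynamic program's correctness precise (in particular, showing the state really is $g(k)$-bounded and the decoupling holds) is the technical heart; once it is in place, multiplying by the $k!$ order guesses gives the fixed-parameter algorithm.
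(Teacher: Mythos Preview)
Your setup through step 4 (guessing the order $\pi$, encoding assignments by gap vectors, and setting inactive gaps to $1$) is fine. The genuine gap is the remaining step: the dynamic program you sketch is not well-defined, and you explicitly flag it as the unproved ``technical heart.'' In particular, your proposed state contains the running distance from $p$, which ranges over $\{1,\dots,|V(P)|\}$ and is therefore not bounded by any function of $k$; and the ``one bit per troubled agent already passed'' does not make sense, since whether a troubled agent's inequality holds depends on gaps you have not yet fixed (its friends may lie on the far side). As stated, the DP neither has $g(k)$-bounded state nor clear transitions, so what you have is an \XP\ algorithm with a hope attached, not an \FPT\ algorithm.

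The reason you get stuck is that you stopped the monotonicity argument too early. Your observation that inactive gaps can be set to $1$ is correct, but the same reasoning applies to \emph{every} gap between two non-$p$ agents, even the active ones. Concretely, fix an ordering with $p$ at rank $k$, and consider decreasing any gap $d_t$ with $t\le k-2$ by $1$ while increasing $d_{k-1}$ by $1$. For a troubled agent $i$ at rank $a$, the distance $\dist(i,p)=\sum_{s=a}^{k-1}d_s$ either stays the same (if $t\ge a$) or increases (if $t<a$), while every friend distance $\sum_{s=\min(a,b)}^{\max(a,b)-1}d_s$ either stays the same or decreases, since $d_{k-1}$ never appears in such a sum. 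Hence the gap vector $(1,\dots,1,n-k+1)$ Pareto-dominates every other admissible gap vector for \emph{all} troubled agents simultaneously. A separate (and similarly elementary) argument shows one may always take $p$ at an endpoint of $P$. This is exactly the paper's approach: place $p$ at $v_n$, try all $(k-1)!$ orderings of the remaining agents on $v_1,\dots,v_{k-1}$, and check individual rationality directly---no DP is needed at all.
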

\begin{proof}
	Let the graph be a path $P=\{v_1,\ldots,v_n\}$, where $n$ is the number of vertices. 
	As the first step of our algorithm, we set $\assgn(p) = v_n$. Next, we guess the ordering $\pi\colon\agents\setminus\{p\}\to[\numagents-1]$ of the vertices on the path. Now, for every $i\in[\numagents-1]$, we set $\assgn(\pi^{-1}(i)) = v_i$.
	
	For the correctness, we show that if there exists an individually rational solution~$\assgn'$, then there also exists an individually rational solution $\assgn$ where agents in $\agents\setminus\{p\}$ are assigned only to vertices $v_1,\ldots,v_{\numagents-1}$ and $p$ is assigned to~$v_n$.
	
	Let $\pi'$ be the ordering of the agents on the path $P$ with respect to the assignment~$\assgn'$. Additionally, assume that $\pi'(p) \not= \numagents$. We denote by $V_L$ the set of agents that are before the agent~$p$ in the ordering~$\pi'$ and by $V_R$ the set of agents that are after the agent~$p$ in~$\pi'$, respectively. We create a new assignment $\assgn$ such that we assign $p$ to $v_n$, we assign the agent $i$ with the smallest index in the ordering $\pi'$ to a vertex in the distance $\dist_P(\assgn'(p),\assgn'(i))$ from $v_n$, and we assign the remaining agents to a sub-path with the right-most vertex being~$\assgn(i)$ in a way that the higher the index of an agent in $\pi'$ is, the farthest from $v_n$ the agents is. For every agent $i\in V_L\cup V_R$ such that $p$ is its enemy, the utility is either the same or increased since $p$ is in the same distance or even farther and all friends of $i$ are in the same distance or closer. For agents for which $p$ is not an enemy, the utility in $\assgn$ is possibly decreased; however, the utility can never be negative as they do not have any enemy. Consequently, the assignment $\assgn$ is also IR, and we can assume that $p$ is assigned to the vertex $v_n$.
	
	Next, assuming that $p$ is assigned to the vertex $v_n$ in $\assgn'$, we show that if $\assgn'$ is IR, then also $\assgn$ with all agents assigned to vertices $v_{1},\ldots,v_{\numagents}$ is IR. Again, let $\pi'$ be the ordering of the agents on the path. Let $i,j$ be a pair of agents such that $\pi'(i) + 1 = \pi'(j)$ and $\alpha = \dist_P(\assgn'(i),\assgn'(j)) \not= 1$. Additionally, let $\beta = \dist_P(v_1,\assgn'(\pi'^{-1}(1)))$. We define a new assignment $\assgn$ such that for every agent $i\in\agents\setminus\{p\}$ with $v_k = \assgn'(i)$, we set $\assgn(i) = v_{k-\beta+1}$ if $\pi'(i) \leq \pi'(i)$ and $\assgn(i) = v_{k-\alpha-\beta+1}$. By this, for every agent $i$ for which the agent $p$ is an enemy, we either increased the utility or the utility is the same since friends remain in the same distance or are closer, and $p$ can be only farther or in the same distance. For all the remaining agents, we may decrease their utility; however, their utility can never be negative as they do not have enemies. Hence, $\assgn$ is also individually rational, finishing the correctness of the algorithm.
	
	The overall running time is $\Oh{\numagents!\cdot|V(G)|}$, which is clearly in \FPT when parameterized by the number of agents~$\numagents$.
\end{proof}

The algorithm in the previous theorem heavily relies on the special path topology. In our next result, we show that this restriction is necessary for tractability; if we allow for an unrestricted topology, the \IRTDGshort problem again becomes hopelessly intractable.

\begin{theorem}\label{thm:Wc:two_types}
	For every distance factor function $\dffn$, if all edges of the enmity graph are oriented toward one agent, then the \IRTDG problem is \Wc parameterized by the number of agents $\agents$, even if there are only $2$ types of agents and the utility function uses three different values.
\end{theorem}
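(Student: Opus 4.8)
Membership in \W[1] is already provided by \Cref{lem:IRTDG:W:agents}, so the plan is to establish \Whness by a parameter-preserving reduction from \textsc{Clique} parameterized by the solution size~$k$, which is well known to be \Wh. Given an instance $(H,k)$ of \textsc{Clique} — assuming without loss of generality that $|V(H)| \ge k+2$, since otherwise a $k$-clique can be found in polynomial time by brute force — I would build the topology $G$ by taking $H$, adding one new vertex $z$ made adjacent to every vertex of $V(H)$, and adding a pendant vertex $y$ adjacent only to $z$. Note that $G$ has diameter $2$, so only the two values $\dffn(1) > \dffn(2)$ of the distance factor function ever enter the picture. The agents are $k$ mutually interchangeable \emph{standard} agents $a_1,\dots,a_k$ together with one \emph{anchor} agent $p$, so there are exactly two types; and the utilities are $\util_{a_i}(a_j) = 1$ for all $i \ne j$, $\util_{a_i}(p) = -\,(k-1)\,\dffn(1)/\dffn(2)$ for all $i$, and $\util_p \equiv 0$. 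Thus only the three values $0$, $1$, and $-(k-1)\dffn(1)/\dffn(2)$ are used, and the enmity graph is exactly the in-star with centre $p$ and leaves $a_1,\dots,a_k$, as required.

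The whole reduction hinges on a single observation, which I would isolate first: since $\dffn(1)/\dffn(2) > 1$, a standard agent placed at distance $1$ from $p$ is never individually rational — even if all $k-1$ other standard agents lay at distance $1$ from it, its utility would be at most $(k-1)\dffn(1) - (k-1)\dffn(1)^2/\dffn(2) = (k-1)\dffn(1)\bigl(1-\dffn(1)/\dffn(2)\bigr) < 0$ — whereas a standard agent at distance $2$ from $p$ has utility $\sum_{j\ne i}\dffn(\dist(a_i,a_j)) - (k-1)\dffn(1)$, which is non-negative precisely when every other standard agent is at distance exactly $1$ from it (the sum has $k-1$ terms, each at most $\dffn(1)$). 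The forward direction is then immediate: given a $k$-clique $\{u_1,\dots,u_k\}$ in $H$, place $a_i$ on $u_i$ and $p$ on the leaf $y$; every standard agent is then at distance $2$ from $p$ and at distance $1$ from every other standard agent, hence has utility exactly $0$, and $p$'s utility is trivially $0$.

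For the reverse direction, suppose $\assgn$ is individually rational, and split on the position of $p$. If $\assgn(p) = z$, then every other vertex of $G$ lies at distance $1$ from $z$, forcing every standard agent to distance $1$ from $p$ — impossible by the observation. If $\assgn(p) = y$ or $\assgn(p) \in V(H)$, then by the observation no standard agent may occupy a vertex at distance $1$ from $p$ (the set $\{z\}$, respectively $N_H(\assgn(p)) \cup \{z\}$), and no standard agent may occupy the leaf $y$ either: its only neighbour $z$ carries no standard agent and is not occupied by $p$, so such an agent would be at distance $\ge 2$ from every other standard agent and at distance $2$ from $p$, giving utility at most $(k-1)\dffn(2) - (k-1)\dffn(1) < 0$. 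Hence all $k$ standard agents sit on vertices of $V(H)$ at distance $2$ from $p$, and by the observation they are pairwise at distance $1$ in $G$; since two vertices of $V(H)$ are at distance $1$ in $G$ exactly when they are adjacent in $H$, their positions form a $k$-clique of $H$. The two instances are therefore equivalent, and as $|\agents| = k+1$ the reduction is parameter-preserving, so together with \Cref{lem:IRTDG:W:agents} the problem is \Wc.

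The step I expect to require the most care is precisely this ruling-out of ``escape'' placements of the standard agents. A naive ``apex'' topology — just $H$ plus the universal vertex $z$, with $p$ somehow pinned to $z$ — does not work: its diameter is $2$, so the moment $p$ is moved onto a vertex of $V(H)$ the remaining standard agents can all park at distance $2$ from $p$ with no clique constraint whatsoever, and the reduction collapses. The pendant leaf $y$ repairs this by handing $p$ a ``parking spot'' that is at distance exactly $2$ from every vertex of $V(H)$, while the large coefficient $(k-1)\dffn(1)/\dffn(2)$ makes any distance-$1$ contact with $p$ fatal; together these confine every standard agent to $V(H)$ regardless of where $p$ ends up, and then the $k$ standard agents are forced into a clique. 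Everything else — the exact arithmetic of the two displayed inequalities and the remark that $p$, having utility identically $0$, never constrains anything — is routine.
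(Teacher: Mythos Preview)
Your proposal is correct and takes essentially the same approach as the paper: the same apex-plus-pendant topology on top of $H$, the same $k+1$ agents with an in-star enmity graph, and utilities scaled so that a standard agent is individually rational iff the anchor is at distance~$2$ and all other standard agents are at distance~$1$ (your scaling $\util_{a_i}(a_j)=1$, $\util_{a_i}(p)=-(k-1)\dffn(1)/\dffn(2)$ is just the paper's choice with $\beta=(k-1)\dffn(1)/\dffn(2)$). Your reverse direction is organized as a direct case split on the location of $p$ rather than the paper's ``reassign $g$ to the pendant'' argument, which is a harmless and arguably cleaner variation.
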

\begin{proof}
	We show \Whness by a parameterized reduction from the \textsc{Clique} problem, which is known to be \Wc when parameterized by the solution size~$k$~\cite{DowneyF1995}. Let $\mathcal{I} = (H,k)$ be an instance of \textsc{Clique}. We construct an equivalent instance $\mathcal{J}$ of the \IRTDG problem as follows.
	
	The topology $G$ is the graph $H$ with added apex vertex~$c$ with a pendant~$p$. By this tweak, every pair of vertices is now in distance $1$ or $2$. The set of agents $\agents$ consists of $k$ \emph{selection agents} $a_1,\ldots,a_k$ and a single \emph{guard agent} $g$. Next, we define the utilities. For the guard agent, we set $\util_g(a_i) = 0$ for every $i\in[k]$. Next, each selection agent $a_i$ receives negative utility from the guard agent and positive utility from other selection agents. The utilities are set so that the guard agent has to be in distance $2$ and the selection agents have to form a clique. Otherwise, a selection agent with fewer than $k-1$ direct neighbors would have negative utility. Specifically, we set $\util_{a_i}(g) = -\beta$ and $\util_{a_i}(a_j) = \frac{\dffn(2)\cdot\beta}{\dffn(1)\cdot(k-1)}$, where $\beta\in\mathbb{R}_{\geq 0}$ is a fixed constant. As utility functions for selection agents are the same, there are clearly only two types of agents. It is also easy to see that utilities use only $3$ different values, namely $0$, $\beta$, and $\frac{\dffn(2)\cdot\beta}{\dffn(1)\cdot(k-1)}$.
	
	For correctness, let $\mathcal{I}$ be \emph{yes}-instance and $K = \{v_{j_1},\ldots,v_{j_k}\}$ be a clique of size~$k$. We define the assignment $\assgn$ as follows. We set $\assgn(g) = p$, and for each selection agent $a_i$, $i\in[k]$, we set $\assgn(a_i) = v_{j_i}$. Under this assignment $\assgn$, the guard agent is in distance $2$ from every selection agent, and, since $K$ is a clique in $H$, the selection agents are in distance $1$. Consequently, for each selection agent $a_i$, $i\in[k]$, we have
	\begin{align*}
		\util_{a_i}(\assgn) 
		= -\dffn(2)\cdot\beta + \dffn(1)\cdot(k-1)\cdot\frac{\dffn(2)\cdot\beta}{\dffn(1)\cdot(k-1)}
		= -\dffn(2)\cdot\beta + \dffn(2)\cdot\beta = 0.
	\end{align*}
	As the utility of the guard agent is always non-negative, the assignment $\assgn$ is clearly individually rational.
	
	In the opposite direction, let $\mathcal{J}$ be a \emph{yes}-instance and $\assgn$ be individually rational assignment. First, we show that the guard agent must be in distance $2$ from each selection agent.
	
	\begin{claim}
		For every selection agent $a_i$, $i\in[k]$, it holds that $\dist(\assgn(a_i),\assgn(g)) = 2$.
	\end{claim}
	\begin{claimproof}
		For the sake of contradiction, let there exist a selection agent $a_i$ such that $\dist(\assgn(a_i),\assgn(g)) = 1$. Then the utility of $a_i$ is
		\begin{align*}
			\util_{a_i}(\assgn) 
			&= -\dffn(1)\cdot\beta + \sum_{j\in[k]\setminus\{i\}}\dffn(\dist(\assgn(a_i),\assgn(a_j)))\cdot \util_{a_i}(a_j)\\
			&\leq -\dffn(1)\cdot\beta + \dffn(1)\cdot(k-1)\cdot\frac{\dffn(2)\cdot\beta}{\dffn(1)\cdot(k-1)}
			= -\dffn(1)\cdot\beta + \dffn(2)\cdot\beta,
		\end{align*}
		which is clearly negative as $\dffn(1) > \dffn(2)$. This is a contradiction with $\assgn$ being individually rational. As the diameter of the graph $G$ is $2$, the claim follows.
	\end{claimproof}
	
	On the basis of the previous claim, we can assume that $\assgn(g) = p$. Clearly, no selection agent is assigned to the vertex $c$ as $c$ is in distance $1$ from every other vertex. Therefore, if $\assgn(g) \not= p$, then by reassigning $g$ to $p$, we keep all distances and utilities. In the next auxiliary claim, we show that in such an individually rational solution, each selection agent has exactly $k-1$ selection agents in distance $1$.
	
	\begin{claim}\label{cl:selectionAgentsDist}
		For each pair of distinct selection agents $a_i,a_j$, $i,j\in[k]$, we have $\dist(\assgn(a_i),\assgn(a_j)) = 1$.
	\end{claim}
	\begin{claimproof}
		We again prove the claim by contradiction. Assume that there exists a pair of distinct vertices $a_i,a_j$ such that $\dist(\assgn(a_i),\assgn(a_j)) = 2$, and that $\assgn$ is individually rational. Then the utility of $a_i$ is
		\begin{align*}
			\util_{a_i}(\assgn)
			&= -\dffn(2)\cdot\beta + \sum_{j\in[k]\setminus\{i\}} \dffn(\dist(\assgn(a_i),\assgn(a_j))\cdot\util_{a_i}(a_j)\\
			&\leq -\dffn(2)\cdot\beta + \dffn(1)\cdot(k-2)\cdot\util_{a_i}(a_j) + \dffn(2)\cdot\util_{a_i}(a_j)\\
			&= -\dffn(2)\cdot\beta + \util_{a_i}(a_j)\cdot(\dffn(1)\cdot(k-2) + \dffn(2))\\
			&= -\dffn(2)\cdot\beta + \util_{a_i}(a_j)\cdot(\dffn(1)\cdot k - \dffn(1)\cdot2 + \dffn(2))\\
			&= -\dffn(2)\cdot\beta + \frac{\dffn(2)\cdot\beta}{\dffn(1)\cdot(k-1)}\cdot(\dffn(1)\cdot k - \dffn(1)\cdot2 + \dffn(2))\\
			&= -\dffn(2)\cdot\beta + \dffn(2)\cdot\beta\cdot\left(\frac{\dffn(1)\cdot k - 2\cdot\dffn(1) + \dffn(2)}{\dffn(1)\cdot(k-1)}\right)\\
			&= -\dffn(2)\cdot\beta + \dffn(2)\cdot\beta\cdot\left(\frac{\dffn(1)(k-1) - \varepsilon}{\dffn(1)(k-1)}\right),
		\end{align*}
		where $\varepsilon > 0$ since $\dffn(1) > \dffn(2)$. Consequently, we see that $\util_{a_i}(\assgn)$ is strictly smaller than $0$. That is, such an assignment~$\assgn$ is not individually rational, which is a contradiction. Hence, all solution agents are necessarily at a distance $1$.
	\end{claimproof}
	
	Now, we construct a set $K\subseteq V(H)$ and show that $K$ is a solution for $\mathcal{I}$. We set $K = \{\assgn(a_i)\mid i\in[k]\}$. Clearly, the set is of size $k$. Suppose that the vertices of $K$ do not form a clique in $H$. Then there exists a pair of distinct vertices $u,v\in K$ such that $\{u,v\}\not\in E(H)$. Recall that the set $K$ was constructed so that there are two selection agents $a_i,a_j$ such that $\assgn(a_i) = u$ and $\assgn(a_j) = v$. By \Cref{cl:selectionAgentsDist}, they have to be in distance $1$; therefore, there exists an edge $\{\assgn(a_i),\assgn(a_j)\}\in E(G)$. Moreover, we already argued that no selection agent is assigned to the vertex $c$ (and, consequently, also none of them is assigned to the vertex $p$). Since each edge that is in $E(G)$ and is not in $E(H)$ contains either vertex $c$ or $p$, it follows that the edge $\{u,v\}$ is necessarily also in $E(H)$. Therefore, such a pair of vertices can never occur and $K$ necessarily forms a clique in $H$.
	
	To conclude, it is easy to see that the reduction can be performed in polynomial time and, moreover, we have $\numagents = k+1$. That is, the reduction is indeed a parameterized reduction and the theorem follows.
\end{proof}

In the following result, we restrict the distance factor function. More specifically, the distance factor function is defined as a monotically decreasing function; however, we now assume that $f$ is monotically decreasing up to some fixed position $d$ and for all $d' > d$, the distance factor functions return zero. This assumption, despite that it deviates from the standard definition, is well motivated by practice: even though grumpy-John and prickly-Jack would fight if they are close to each other, it is unlikely that this circumstance occurs when they are assigned to completely different rooms. That is, it is natural to assume that the impact of agents that are very far away from each other is negligible. We call such a special function~$f$ a \emph{bounded distance factor function} and~$d$ its boundedness level. Note that a similar assumption is common in many similar scenarios such as social distance games~\cite{BranzeiL2011,KaklamanisKP2018,FlamminiKOV2020,BalliuFMO2022,GanianHKRSO2023}.

With the assumption of a bounded distance factor function, we give an \FPT algorithm for parameterization by the number of agents and the twin-width of the topology combined. The twin-width is a recently introduced structural parameter~\cite{BonnetKTW22} that is becoming very successful in the development of fixed-parameter algorithms for several reasons; first, it possesses a very convenient graph decomposition that can be exploited in the design of dynamic programming algorithms~\cite{BonnetGKTW2021}, and it generalizes many widely-studied structural graph parameters, such as the celebrated tree-width~\cite{BonnetD2023}. It is worth noting that in our result, we assume that the twin-width decomposition of the optimal width is given as part of the input. Although the computation of such an optimal decomposition is already \NPh for graphs of twin-width $4$~\cite{BergeBD2022}, there exist efficient solvers that compute optimal decompositions for practical instances in reasonable time~\cite{SchidlerS2022,SchidlerS2023}.

\begin{theorem}\label{thm:FPT_cw}
	If the distance factor function is bounded, the \IRTDG problem is in \FPT parameterized by the number of agents $\numagents$ and the twin-width of the topology $\tww$ combined, assuming the twin-width decomposition for~$G$ is provided in the input.
\end{theorem}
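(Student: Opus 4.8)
The plan is to cast \IRTDG with a bounded distance factor function as a first‑order model‑checking question on the topology $G$, and then invoke the metatheorem of Bonnet, Kim, Thomassé and Watrigant~\cite{BonnetKTW22}: given a graph $G$ together with a contraction sequence of width $w$, one can decide $G\models\varphi$ in time $g(w,|\varphi|)\cdot|V(G)|$ for any first‑order sentence~$\varphi$. Since the contraction sequence is supplied in the input and $\numagents$ is a parameter, it suffices to produce, from an instance $\mathcal{I}=(G,\agents,\util,\dffn)$ with $\dffn$ bounded at level $d$, a first‑order sentence $\varphi$ over $G$ whose length is bounded by a function of $\numagents$ and $d$ and such that $G\models\varphi$ iff $\mathcal{I}$ is a yes‑instance. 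Identify $\agents$ with $[\numagents]$.

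First I would observe that because $\dffn(d')=0$ for every $d'>d$, the utility $\util_i(\assgn)$ depends only on the capped distances $\min\{d{+}1,\dist_G(\assgn(i),\assgn(j))\}$, so it is enough to speak about distances up to $d$. For each fixed $\ell\in[d]$, ``$\dist_G(x,y)=\ell$'' is definable by a first‑order formula $\delta_\ell(x,y)$ of size $\mathrm{poly}(d)$ (there is a walk of length $\ell$ and no shorter one), and $\delta_0(x,y):=\bigwedge_{\ell\in[d]}\lnot\delta_\ell(x,y)$ encodes ``distance exceeds $d$''; on distinct vertices exactly one of $\delta_0,\dots,\delta_d$ holds. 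For every agent $i$ I would enumerate all distance profiles $\tau\colon[\numagents]\setminus\{i\}\to\{0,1,\dots,d\}$ and call $\tau$ \emph{good} if $\sum_{j:\tau(j)\ge1}\util_i(j)\cdot\dffn(\tau(j))\ge 0$ (checkable in polynomial time), and set
\[
\mathrm{IR}_i(x_1,\dots,x_{\numagents}):=\bigvee_{\tau\ \text{good for}\ i}\ \bigwedge_{j\neq i}\delta_{\tau(j)}(x_i,x_j),
\qquad
\varphi:=\exists x_1\cdots\exists x_{\numagents}\Bigl(\bigwedge_{1\le i<i'\le\numagents}x_i\neq x_{i'}\ \wedge\ \bigwedge_{i\in[\numagents]}\mathrm{IR}_i\Bigr).
\]
Then an assignment $x_i\mapsto\assgn(i)$ satisfies $\mathrm{IR}_i$ exactly when the true capped‑distance profile of $\assgn$ at $i$ is good, i.e.\ when $\util_i(\assgn)\ge0$; hence $G\models\varphi$ iff an individually rational assignment exists. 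The size of $\varphi$ is $\Oh{\numagents^2\cdot(d{+}1)^{\numagents}\cdot\mathrm{poly}(d)}$, a function of $\numagents$ and $d$ only.

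Finally I would feed $G$, the input contraction sequence, and $\varphi$ to the metatheorem; the running time is $g(\tww,|\varphi|)\cdot|V(G)|=g'(\numagents,\tww,d)\cdot|V(G)|$, which is \FPT in $\numagents+\tww$ since $d$ is a fixed constant of $\dffn$. (Equivalently, one could run the metatheorem on the edge‑coloured $d$‑th power of $G$, whose twin‑width is bounded in terms of $\tww$ and $d$ because it is a first‑order transduction of $G$; working over $G$ directly just spares us from constructing a fresh contraction sequence.) I expect the only genuinely delicate point to be the translation step — expressing the cardinal requirement $\util_i(\assgn)\ge0$ by a purely relational first‑order formula of bounded length — and this is exactly where the number of agents must enter the parameter, since the profile enumeration has size exponential in $\numagents$; beyond that, the proof is a black‑box application of the (deep) twin‑width model‑checking theorem. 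This also yields Corollary~\ref{thm:FPT_sd}: graphs of bounded shrub‑depth have bounded twin‑width and bounded diameter, so truncating an arbitrary $\dffn$ at the diameter (which does not affect any realizable distance) makes it bounded with $d$ at most the diameter.
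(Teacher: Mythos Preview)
Your proposal is correct and follows essentially the same route as the paper: reduce to first-order model checking on graphs of bounded twin-width by expressing bounded distances in FO and enumerating, for each agent, the finitely many ``good'' distance profiles towards the other agents, then apply the metatheorem of~\cite{BonnetKTW22}. Your write-up is in fact more explicit than the paper's sketch (which packages the same idea as ``valid solution parts'' $(X,\delta)$), and your remark deriving Corollary~\ref{thm:FPT_sd} via bounded diameter matches the paper's reasoning as well.
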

\begin{proof}[Proof Sketch.]
	It is known~\cite{BonnetKTW22} that First-Order (FO) model checking can be decided on graphs of bounded twin-width.
	That is, there exists an algorithm that, given an FO formula~$\varphi$ with~$k$ and a decomposition of a graph~$G$ of width~$\tww$, decides in $f(\tww,k)\cdot n^\Oh{1}$ time whether~$\varphi$ holds for~$G$ (where $f$ is a computable function).
	
	It is not hard to see that if the distance factor function is bounded by~$d$, i.e., we have $f(d) = f(d+1) = \cdots = 0$, then the solution only depends on ``small'' distances.
	For each~$p \in [d]$ we build a formula $\varphi_p(v_a, v_b)$ that, given two vertices~$v_a$ and~$v_b$ to which we assign agents $a,b \in N$, holds if the two vertices are in distance at most~$p$.
	Such a formula is, e.g.,
	\begin{multline*}
		\varphi_p(v_a, v_b) = \exists v^1\,\exists v^2\, \cdots\, \exists v^p\colon (v^1 = v_a) \land (v^p = v_b) \land \bigwedge_{1 \le i < j \le p} v^i \neq v^j \bigwedge_{i = 1}^{p-1} \{v^i, v^{i+1} \} \in E \,.
	\end{multline*}
	Now, if we have $\varphi_p(v_a, v_b)$ for all $p \in [d]$, we know the distance between the two agents exactly (it equals to the smallest $p$ such a formula holds for).
	In the rest of the proof, we will deal with the complexity the distance function brings to the problem.
	
	To overcome this, we observe that we can precompute for each agent~$a$, a set of agents $X \subseteq N \setminus \{a\}$, and a function $\delta \colon X \in [d]$, whether $a$ has positive utility if the set $X$ is their $d$-neighborhood and their distances in the assignment are~$\delta$.
	Note that all of this amounts to a number bounded by a function dependent solely on~$\numagents$ and $d$.
	If we then collect all such configurations into the set of valid solution parts~$\mathcal{V}(a) = \{ (X, \delta) \}$, we can use these in the formula for deciding the problem.
	
	Overall we are looking for an assignment, i.e., distinct vertices $v_a$ for $a \in N$ so that for each agent at least one valid solution part holds~$(X, \delta)$.
	It is clear that there is at most one valid solution part and therefore this yields exactly one, as needed.
\end{proof}

To conclude, we return to the standard definition of the distance factor function and give one more positive result. Specifically, we show that the \IRTDGshort problem is in \FPT when parameterized by the number of agents and the shrub-depth of the topology combined. Shrub-depth~\cite{GanianHNOM19} is a structural parameter that can be seen as an equivalent of the famous tree-depth for dense graphs. Similarly to tree-depth, graphs of bounded shrub-depth cannot contain long induced paths; hence, their diameter is bounded. Therefore, the algorithm for shrub-depth parameterization follows by a direct application of \Cref{thm:FPT_cw}; the bounded shrub-depth graphs also admit the bounded twin-width and the boundedness of the distance factor function follows from the bounded-diameter property.

\begin{corollary}\label{thm:FPT_sd}
	For every distance factor function $\dffn$, the \IRTDG problem is in \FPT parameterized by the shrub-depth of the topology and the number of agents $\numagents$ combined.
\end{corollary}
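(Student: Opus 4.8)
The plan is to obtain this corollary as an essentially immediate consequence of \Cref{thm:FPT_cw}, by checking that both of its hypotheses---a \emph{bounded} distance factor function and a topology of bounded twin-width together with its decomposition---are automatically satisfied once the topology has bounded shrub-depth. So the whole argument is a short reduction of the shrub-depth setting to the twin-width setting already handled.

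The first step is to exploit the diameter bound. Since graphs of bounded shrub-depth exclude a fixed-length path as an induced subgraph, each connected component of a topology $G$ of shrub-depth at most $h$ has diameter at most some $D(h)$ depending only on $h$. Hence, in any assignment $\assgn$, two agents placed in the same component of $G$ are at distance at most $D(h)$, while two agents in distinct components are at distance $\infty$, for which $\dffn(\infty)=0$ anyway. Therefore replacing $\dffn$ by the function $\dffn'$ that coincides with $\dffn$ on $\{1,\ldots,D(h)\}$ and is $0$ on all larger arguments does not change $\util_i(\assgn)$ for any agent $i$ or any assignment $\assgn$; the instance is thus equivalent to one whose distance factor function is bounded with boundedness level $D(h)$. (Note $\dffn'$ is still strictly decreasing on $\{1,\ldots,D(h)\}$, so it is a legitimate bounded distance factor function in the sense used by \Cref{thm:FPT_cw}.)

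The second step is to produce the twin-width decomposition that \Cref{thm:FPT_cw} demands. Here one invokes the known fact that classes of bounded shrub-depth have bounded twin-width, and, moreover, that a contraction sequence of the corresponding (bounded) width can be computed in polynomial time---either directly from a tree-model witnessing the shrub-depth, or by passing through the bounded clique-width expression that bounded-shrub-depth graphs admit. (If one prefers to avoid twin-width entirely, one can alternatively reprove the First-Order model-checking step inside the proof of \Cref{thm:FPT_cw} using that FO model checking is fixed-parameter tractable on bounded-shrub-depth graphs without a decomposition given in the input; but routing through \Cref{thm:FPT_cw} is cleaner.)

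Combining the two steps: given a \IRTDGshort instance whose topology has shrub-depth at most $h$, we first switch to the equivalent instance with the bounded distance factor function $\dffn'$ of boundedness level $D(h)$, then compute a twin-width decomposition of width $g(h)$, and finally run the algorithm of \Cref{thm:FPT_cw} with parameters $\numagents$ and $g(h)$. As both $D(h)$ and $g(h)$ depend only on $h$, the total running time has the form $F(\numagents,h)\cdot |V(G)|^{\Oh{1}}$, which is \FPT parameterized by $\numagents$ and the shrub-depth combined. I expect the only genuinely delicate point to be the last sentence of the second step---since \Cref{thm:FPT_cw} takes the twin-width decomposition as part of its input, one must be careful to argue that such a decomposition is obtainable from a shrub-depth witness within the claimed bound; the rest is bookkeeping around the diameter bound.
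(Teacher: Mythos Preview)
Your proposal is correct and follows essentially the same route as the paper: reduce to \Cref{thm:FPT_cw} by (i) using the bounded-diameter property of bounded-shrub-depth graphs to replace $\dffn$ by a bounded distance factor function, and (ii) using that bounded shrub-depth implies bounded twin-width with a computable decomposition. The paper is terser---it states the two implications in a single sentence and separately notes that shrub-depth can be computed in \FPT time---whereas you spell out the diameter argument and flag the decomposition computation as the one nontrivial point; but the underlying argument is the same.
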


Note that shrub-depth, unlike twin-width, can be computed in \FPT time~\cite{GajarskyK2020}. Moreover, shrub-depth is a generalization of the tree-depth. Hence, we see that \IRTDGshort is in \FPT also when parameterized by the number of agents and the tree-depth and consequently also for the combined parameter the number of agents and the vertex-cover number.

\section{Conclusions}
\label{sec:conclusions}
This paper studied the complexity of finding individually rational solutions in topological distance games, which is arguably one of the most fundamental stability notions. 
Albeit this class of games can capture a plethora of models, its versatility comes with the drawback of strong intractability results even for very restricted cases, at least from the theoretical point of view. 
As our results reveal, individual rationality is hard to be assured even if someone resorts to the parameterized complexity regime. However, our results do not imply parameterized-complexity hardness for jump stability. We strongly believe that this avenue deserves further study.

At a different dimension, our results indicate that following a worst-case point of view is not sufficient for tractability. This makes someone wonder, whether there exist some natural values for the parameters of the model that ensure IR in practice. If not, does individual rationality even exist?

\section*{Acknowledgments}

This work was co-funded by the European Union under the project Robotics and advanced industrial production (reg. no. CZ.02.01.01/00/22\_008/0004590).
Argyrios Deligkas acknowledges the support of the EPSRC grant EP/X039862/1.
Šimon Schierreich acknowledges the additional support of the Grant Agency of the Czech Technical University in Prague, grant No. SGS23/205/OHK3/3T/18.

\bibliographystyle{plain}
\bibliography{references}

\appendix
	
\end{document}